\newcommand{\beql}[1]{\begin{equation}\label{#1}}
\newcommand{\eeq}{\end{equation}}
\newtheorem{theorem}{Theorem}
\newtheorem{corollary}{Corollary}
\newtheorem{problem}{Problem}
\theoremstyle{definition}
\newcommand{\hide}[1]{}
\newcommand{\argmax}{\mathop{\rm argmax}}
\newcommand{\argmin}{\mathop{\rm argmin}}
\newcommand{\OPT}{\textsf{OPT}\xspace}
\newcommand{\wmax}{w_{\rm max}}
\newcommand{\wmin}{w_{\rm min}}
\title{Finding Densest $k$-Connected Subgraphs} 
\author[1]{Francesco Bonchi\thanks{francesco.bonchi@isi.it}}
\author[1]{David Garc\'ia-Soriano\thanks{d.garcia.soriano@isi.it}}
\author[2]{Atsushi Miyauchi\thanks{miyauchi@mist.i.u-tokyo.ac.jp}}
\author[1,3]{Charalampos~E.~Tsourakakis\thanks{tsourolampis@gmail.com}}
\affil[1]{ISI Foundation, Turin, Italy}
\affil[2]{University of Tokyo, Tokyo, Japan}
\affil[3]{Boston University, Boston, USA}
\begin{document}
\maketitle

\begin{abstract}
Dense subgraph discovery is an important graph-mining primitive with a variety of real-world applications. 
One of the most well-studied optimization problems for dense subgraph discovery is the densest subgraph problem, 
where given an edge-weighted undirected graph $G=(V,E,w)$, 
we are asked to find $S\subseteq V$ that maximizes the density $d(S)$, i.e., half the weighted average degree of the induced subgraph $G[S]$. 
This problem can be solved exactly in polynomial time and well-approximately in almost linear time. 
However, a densest subgraph has a structural drawback, 
    namely, the subgraph may not be robust to vertex/edge failure. 
Indeed, a densest subgraph may not be well-connected, 
which implies that the subgraph may be disconnected by removing only a few vertices/edges within it. 
In this paper, we provide an algorithmic framework to find a dense subgraph that is well-connected in terms of vertex/edge connectivity. 
Specifically, we introduce the following problems: given a graph $G=(V,E,w)$ and a positive integer/real $k$, 
we are asked to find $S\subseteq V$ that maximizes the density $d(S)$ under the constraint that $G[S]$ is $k$-vertex/edge-connected. 
For both problems, we propose polynomial-time (bicriteria and ordinary) approximation algorithms, 
using classic Mader's theorem in graph theory and its extensions. 

\end{abstract}

\clearpage

\section{Introduction} 
\label{sec:intro} 
Dense subgraph discovery is an important graph-mining primitive with a variety of real-world applications~\cite{Gionis_Tsourakakis_15}. 
Examples include detecting communities and spam link farms in the Web graph~\cite{Dourisboure+07,Gibson+05}, 
extracting molecular complexes in protein--protein interaction networks~\cite{Bader_Hogue_03,Spirin03}, 
finding experts in crowdsourcing systems~\cite{Kawase+19}, 
and real-time story identification from tweets~\cite{Angel+12}. 

One of the most well-studied optimization problems for dense subgraph discovery is the \emph{densest subgraph problem}. 
Let $G=(V,E,w)$ be a simple undirected graph with edge weight $w:E\rightarrow \mathbb{R}_{>0}$, 
where $\mathbb{R}_{>0}$ is the set of positive reals. 
Throughout this paper, we assume that $|E|\geq 1$, edge-weighted graphs have only positive weights, and $G$ is connected. 
For $S\subseteq V$, let $G[S]$ denote the subgraph induced by $S$, 
i.e., $G[S]=(S,E(S))$, where $E(S)=\{\{u,v\}\in E\mid u,v\in S\}$. 
The \emph{density} of $S\subseteq V$ is defined as $d(S)=w(S)/|S|$, 
where $w(S)$ is the sum of edge weights of $G[S]$, i.e., $w(S)=\sum_{e\in E(S)}w(e)$. 
In the densest subgraph problem, given a graph $G=(V,E,w)$, we are asked to find $S\subseteq V$ that maximizes $d(S)$. 
An optimal solution to this problem is referred to as a \emph{densest subgraph}. 

Unlike most optimization problems for dense subgraph discovery such as the maximum clique problem~\cite{Garey_Johnson_79}, 
the densest subgraph problem is polynomial-time solvable. 
Indeed, there are some polynomial-time exact algorithms 
such as Goldberg's flow-based algorithm~\cite{Goldberg84} and Charikar's linear-programming-based algorithm~\cite{Charikar00}. 
Moreover, it was shown by Charikar~\cite{Charikar00} that a simple greedy algorithm admits $1/2$-approximation in almost linear time. 
Partially due to its solvability, the densest subgraph problem has been employed in many real-world applications. 

However, it can be seen that a densest subgraph has a structural drawback, 
that is, the subgraph may not be robust to vertex/edge failure. 
To see this, let us introduce some terminology. 
A vertex subset $S\subset V$ is called a \emph{vertex separator} of $G$ 
if its removal disconnects $G$, i.e., partitions $G$ into at least two non-empty graphs between which there are no edges.  
Note that no clique has a vertex separator. 
An edge subset $F\subseteq E$ is called a \emph{cut} of $G$ 
if its removal disconnects $G$.
The \emph{weight} of a cut is defined to be the sum of weights of edges within it. 
The \emph{vertex connectivity} of $G$, denoted by $\kappa(G)$, 
is the smallest cardinality of a vertex separator of $G$ if $G$ is not a clique and $|V|-1$ otherwise. 
The \emph{edge connectivity} of $G$, denoted by $\lambda(G)$, is the smallest weight of a cut of $G$. 

A densest subgraph does not necessarily have large vertex/edge connectivity, which means that the subgraph may be disconnected by removing only a few vertices/edges within it. 
For instance, consider an unweighted graph $G$ (i.e., $w(e)=1$ for every $e\in E$) 
consisting of two equally-sized large cliques that share only a few vertices or are connected by only a few edges. 
In both cases, the entire graph is a densest subgraph, 
but it is easily disconnected by removing the common vertices in the former case and the bridging edges in the latter case. 


\begin{figure}[t]
\centering
\begin{minipage}{0.49\textwidth}
\centering
\includegraphics[scale=0.42]{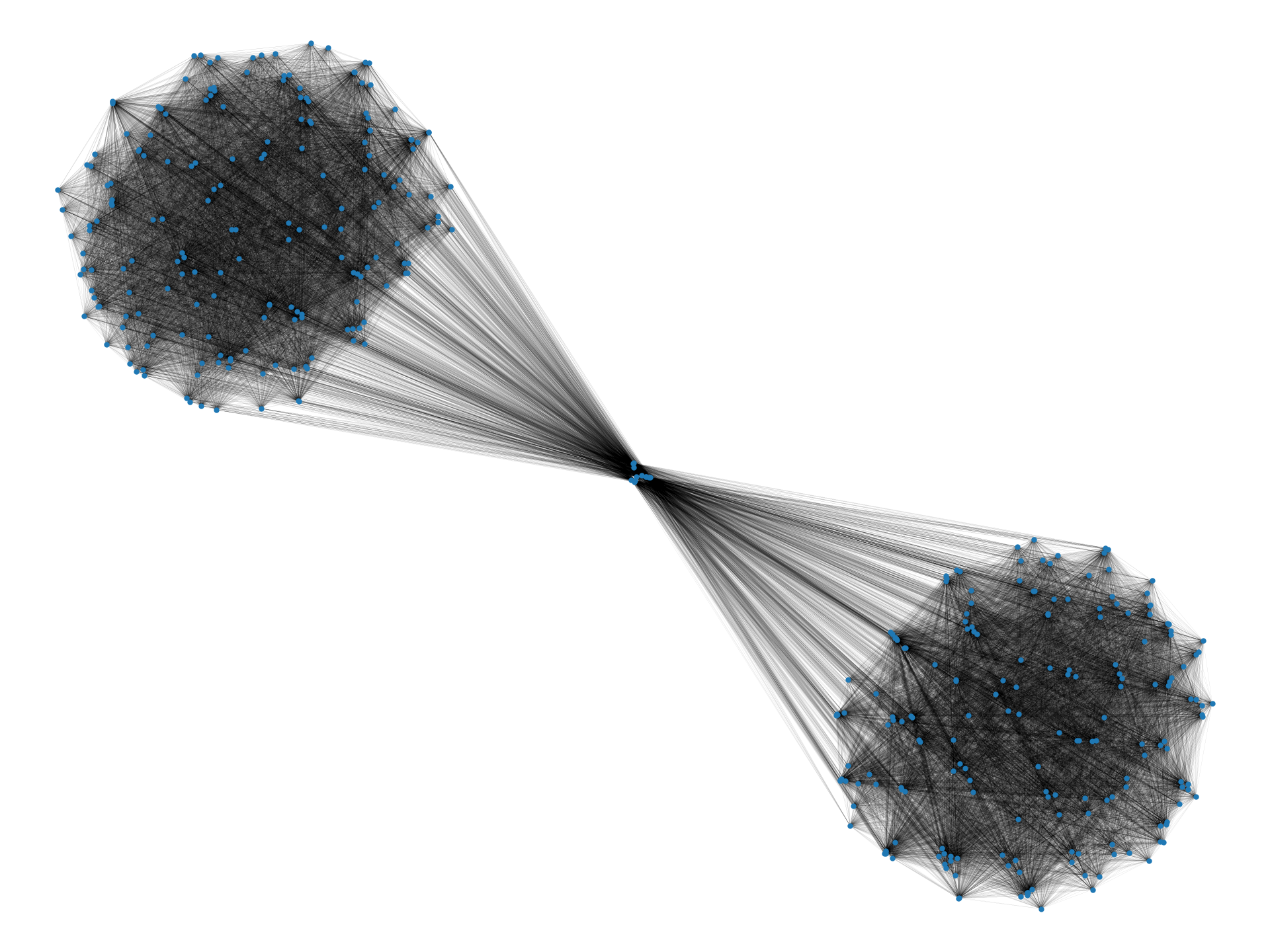}
\subcaption{\texttt{web-BerkStan}}
\end{minipage}
\begin{minipage}{0.49\textwidth}
\centering
\includegraphics[scale=0.42]{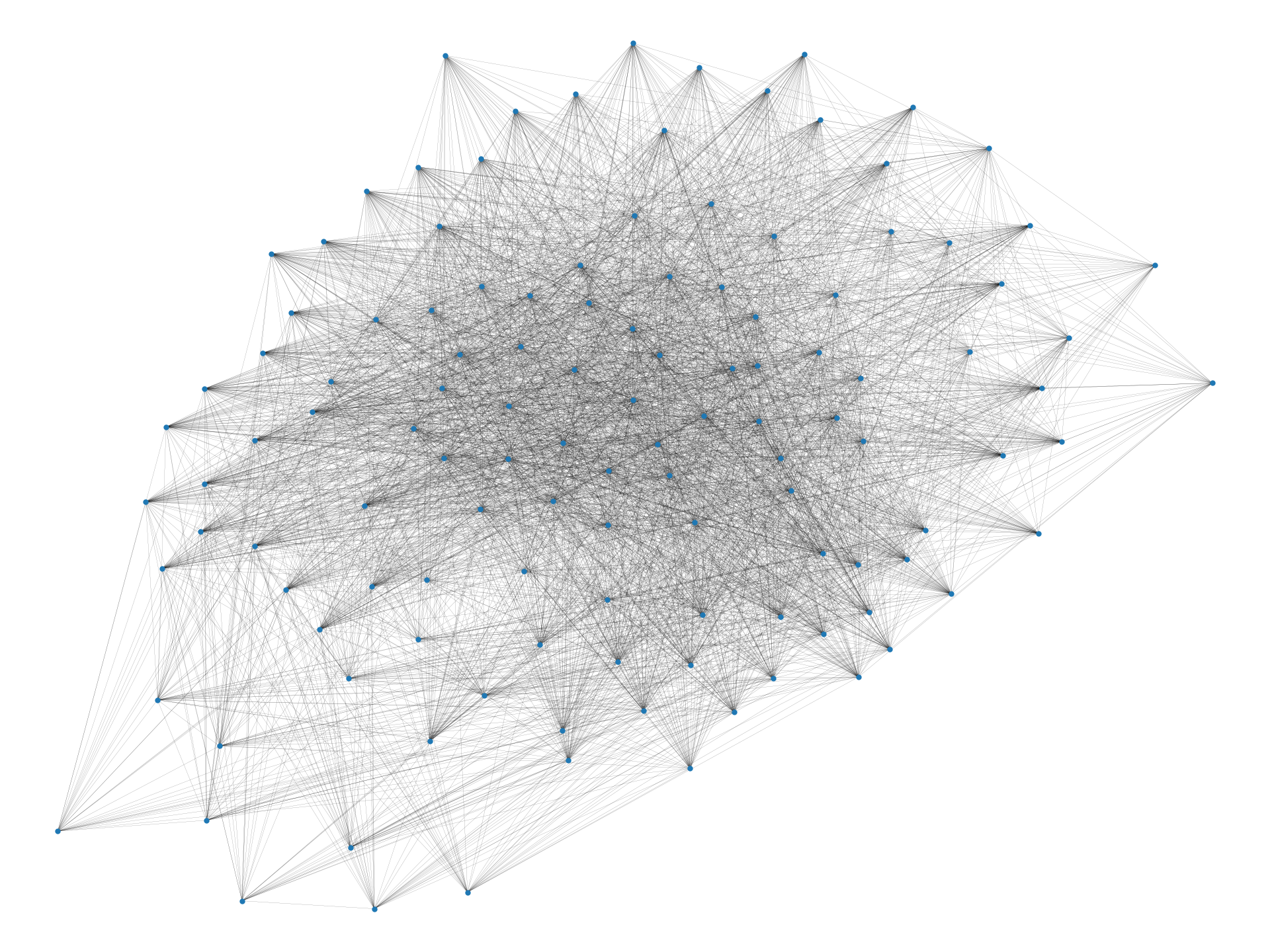}
\subcaption{\texttt{web-Google}}
\end{minipage}\\
\begin{minipage}{0.49\textwidth}
\centering
\includegraphics[scale=0.42]{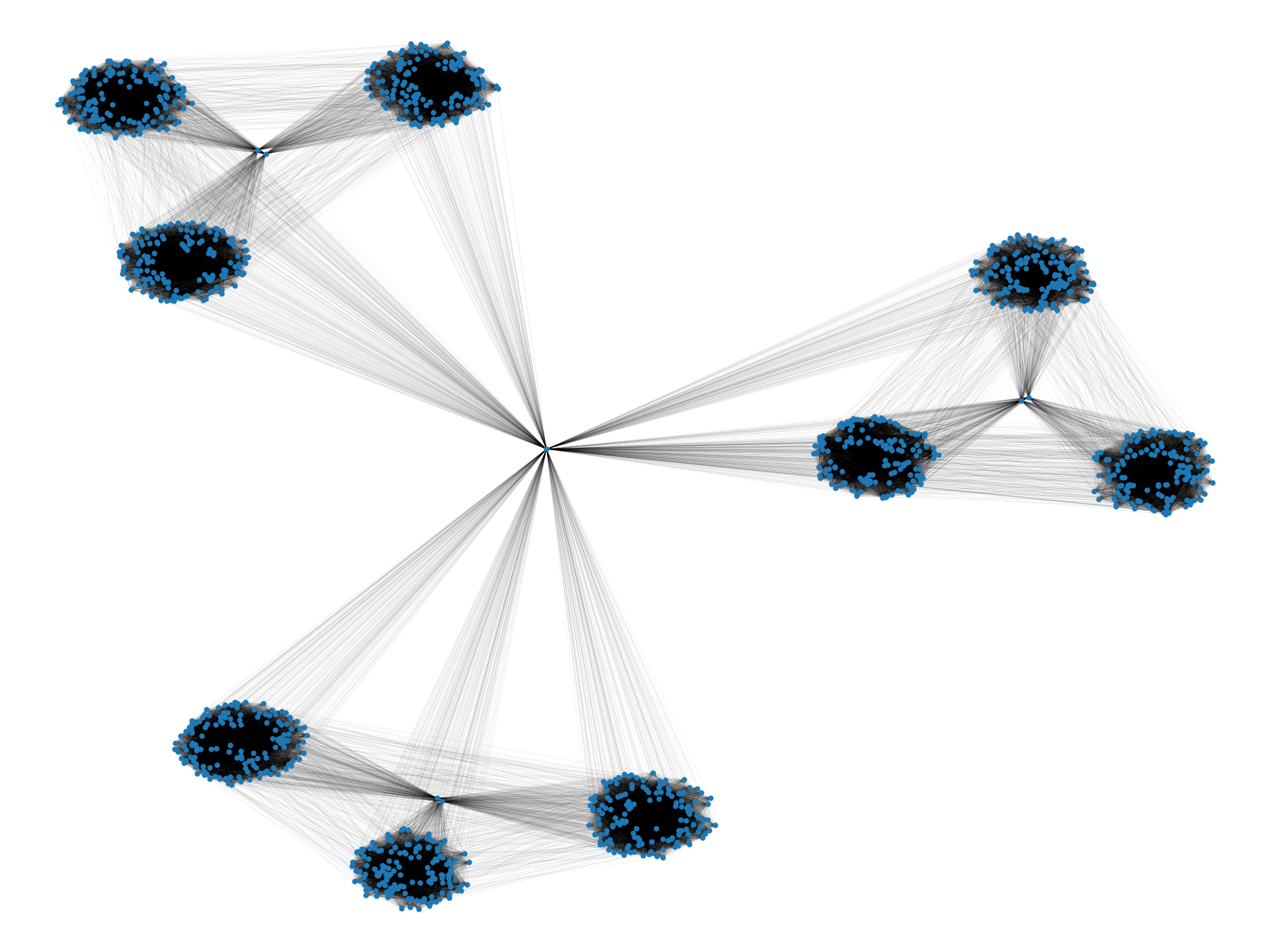}
\subcaption{\texttt{web-NotreDame}}
\end{minipage}
\begin{minipage}{0.49\textwidth}
\centering
\includegraphics[scale=0.42]{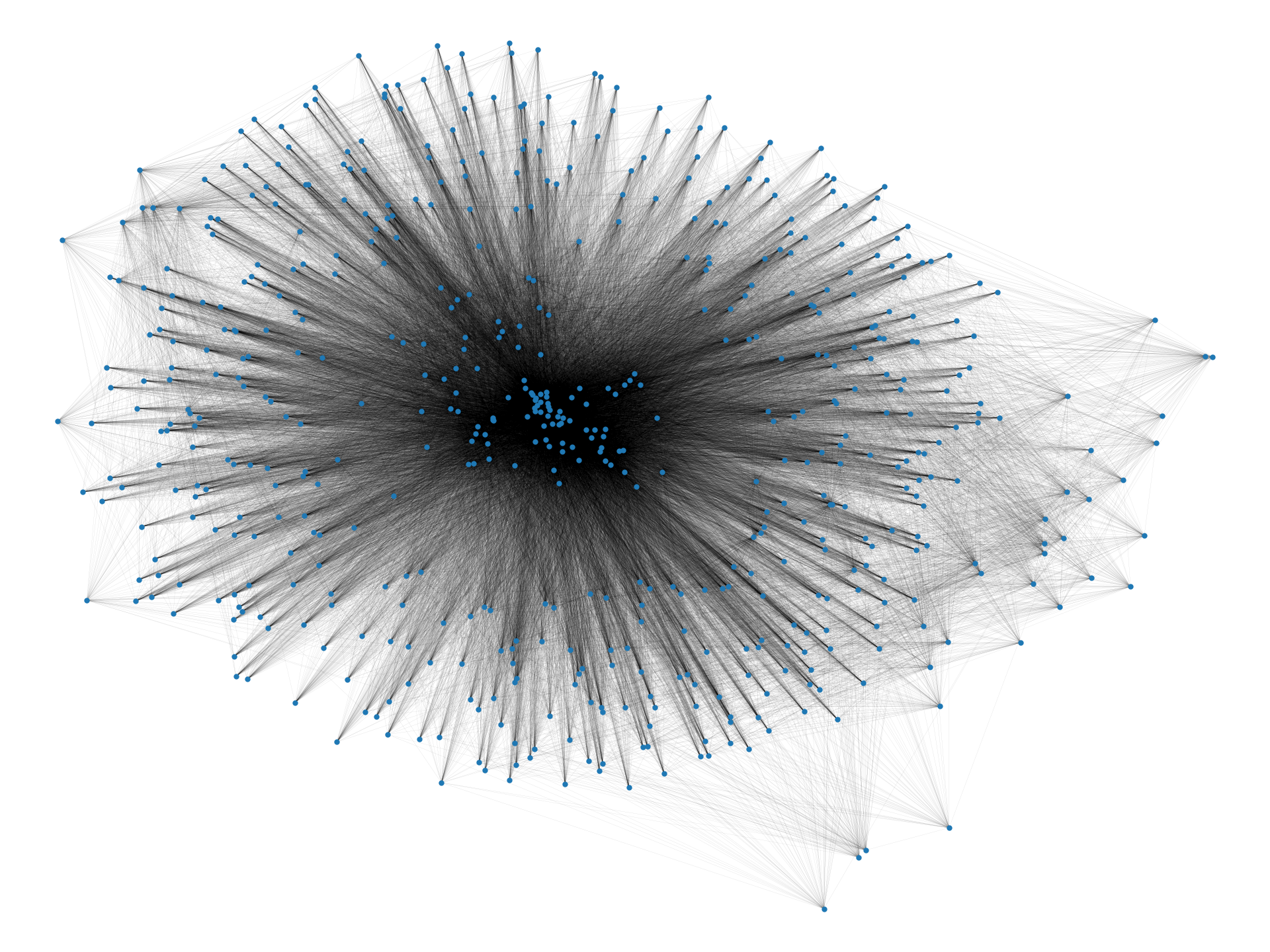}
\subcaption{\texttt{web-Stanford}}
\end{minipage}
\caption{Densest subgraphs in real-world Web graphs.}\label{fig:densest}
\end{figure}

In this paper, we provide an algorithmic framework to find a dense subgraph that is well-connected in terms of vertex/edge connectivity. 
An (edge-weighted) graph $G$ is said to be \emph{$k$-vertex-connected} 
if $\kappa(G)$ is no less than $k$. 
On the other hand, an edge-weighted graph $G$ is said to be \emph{$k$-edge-connected}
if $\lambda(G)$ is no less than $k$. 
Using these criteria, we introduce the following problems:
\begin{problem}[Densest $k$-vertex-connected subgraph]
\label{prob:vertex}
Given an edge-weighted undirected graph $G=(V,E,w)$, where $w:E\rightarrow \mathbb{R}_{>0}$, 
and a positive integer $k\in \mathbb{Z}_{>0}$,
the goal is to find $S\subseteq V$ that maximizes the density $d(S)$ subject to the constraint that the induced subgraph $G[S]$ is $k$-vertex-connected. 
\end{problem}

\begin{problem}[Densest $k$-edge-connected subgraph]
\label{prob:edge}
Given an edge-weighted undirected graph $G=(V,E,w)$, where $w:E\rightarrow \mathbb{R}_{>0}$, 
and a positive real $k\in \mathbb{R}_{>0}$,
the goal is to find $S\subseteq V$ that maximizes the density $d(S)$ subject to the constraint that the induced subgraph $G[S]$ is $k$-edge-connected. 
\end{problem}

In the two-cliques example we discussed earlier, an optimal solution to Problems~\ref{prob:vertex} and~\ref{prob:edge} with a  sufficiently large value for $k$ would be one of the cliques, which is robust to vertex/edge failure and nearly as dense as the densest subgraph (i.e., the entire graph).  We observe that Problems~\ref{prob:vertex} and~\ref{prob:edge} are meaningful for real-world data too; Figure~\ref{fig:densest} visualizes densest subgraphs of the four real-world Web graphs that are publicly available at SNAP (Stanford Network Analysis Project)~\cite{snap} using a spring layout positioning.\footnote{Graphs have been made simple undirected by ignoring the direction of edges, and by removing self-loops and multiple edges.}
As we can visually observe,    small separators may exist in real-world densest subgraphs.  Table~\ref{tab:densest} summarizes the  detailed statistics. 
As can be seen, the densest subgraphs in \texttt{web-BerkStan} and \texttt{web-NotreDame} have surprisingly small vertex connectivity; 
in fact, they have vertex connectivity of twelve and one, respectively. Note that for both densest subgraphs, vertex connectivity is much smaller than the minimum degree of vertices, a trivial upper bound on that.

\begin{table}[t]
\centering
\caption{Details of densest subgraphs $G[S^\text{DS}]$ in four real-world Web graphs: $\delta(G[S^\text{DS}])$ represents the minimum degree of vertices in $G[S^\text{DS}]$, a trivial upper bound on $\kappa(G[S^\text{DS}])$ and $\lambda(G[S^\text{DS}])$.}\label{tab:densest}
\begin{tabular}{lrrrrrr}
\toprule
Graph  &$|S^\text{DS}|$  &$|E(S^\text{DS})|$ &$d(S)$ &$\kappa(G[S^\text{DS}])$ &$\lambda(G[S^\text{DS}])$ &$\delta(G[S^\text{DS}])$\\
\midrule
\texttt{web-BerkStan} &392  &40,535  &103.41      &12 &201 &201\\
\texttt{web-Google}   &123  &3,449   &28.04      &30  &30 &30\\
\texttt{web-NotreDame}&1,367&107,526 &78.66       &1  &155 &155\\
\texttt{web-Stanford} &597  &35,456  &59.39       &60 &60  &60 \\
\bottomrule
\end{tabular}
\end{table}

For both problems, we propose polynomial-time (bicriteria and ordinary) approximation algorithms. 
Let $\wmax$ and $\wmin$ denote the maximum and minimum weights, respectively, over all edges in $G$,  
i.e., $\wmax = \max_{e\in E}w(e)$ and $\wmin = \min_{e\in E}w(e)$. 

Our first result is polynomial-time $\left(\frac{\gamma}{4}\cdot \frac{\wmin}{\wmax}, 1/\gamma\right)$-bicriteria approximation algorithms with parameter $\gamma \in [1,2]$ for Problems~\ref{prob:vertex} and~\ref{prob:edge}. 
That is, the algorithm for Problem~\ref{prob:vertex}/Problem \ref{prob:edge} outputs $S\subseteq V$ having density at least the optimal value times $\frac{\gamma}{4}\cdot
\frac{\wmin}{\wmax}$ but only satisfies a $(k/\gamma)$-vertex/edge-connectivity constraint (rather than the original $k$-vertex/edge-connectivity constraint). 
Note that if we set $\gamma=1$, we can obtain $\left(\frac{1}{4}\cdot \frac{\wmin}{\wmax}\right)$-approximation algorithms. 
The design of our algorithms is based on an elegant theorem in graph theory, proved by Mader~\cite{Mader72}. 
The theorem states that any (unweighted) dense graph contains a highly vertex-connected subgraph 
wherein the minimum degree of vertices is greater than the density of the entire vertex set. 
We refer to this subgraph as a \emph{Mader subgraph} 
and our algorithm finds a Mader subgraph of a densest subgraph of each maximal $k$-vertex-connected subgraph of $G$. 
It should be noted that to deal with edge-weighted graphs, we generalize Mader's theorem. 
Our generalized version cannot be directly obtained from the original statement of Mader's theorem, 
and is essential to derive the bicriteria approximation ratio for edge-weighted graphs. 

Our second result is polynomial-time $\left(\frac{6}{19}\cdot \frac{\wmin}{\wmax}\right)$-approximation algorithms for Problems~\ref{prob:vertex} and~\ref{prob:edge}, 
which improves the above approximation ratio of $\frac{1}{4}\cdot \frac{\wmin}{\wmax}$ 
derived directly from the bicriteria approximation ratio. 
Our algorithm for Problem~\ref{prob:vertex}/Problem~\ref{prob:edge} computes the most highly connected subgraph in terms of vertex/edge connectivity, which can be done using the algorithms in Matula~\cite{Matula78}. 
In the analysis of the approximation ratio, 
we use a useful variant of Mader's theorem, recently proved by Bernshteyn and Kostochka~\cite{Bernshteyn_Kostochka_16}.

\paragraph{Paper organization.} The remainder of this paper is organized as follows. In Section~\ref{sec:related}, we review related work.  In Section~\ref{sec:mader}, we extend Mader's theorem to edge-weighted graphs and design an algorithm for finding a Mader subgraph. 
In Sections~\ref{sec:biapprox} and~\ref{sec:unweighted}, we present our bicriteria and ordinary approximation algorithms, respectively. We conclude with some open problems in Section~\ref{sec:concl}.

\section{Related Work}
\label{sec:related}
\paragraph{Variations of the densest subgraph problem.}  Wu et al.~\cite{WuZLFJZ16}  consider the problem of detecting a dense and connected subgraph in \emph{dual networks}. 
A dual network is a pair of graphs $G=(V,E_G)$ and $H=(V,E_H)$ defined on the same vertex set $V$, 
which encode different types of connections using two edge sets $E_G$ and $E_H$. 
Wu et al.~\cite{WuZLFJZ16} introduced the following problem: 
given a dual network $(G,H)$, we are asked to find $S\subseteq V$ that maximizes $d(S)$ in $G$ under the constraint that $H[S]$ is connected (i.e., $1$-edge-connected). 
They proved that the problem is NP-hard and devised a scalable heuristic. 
Problem~\ref{prob:edge} with $k=1$, i.e., the densest $1$-edge-connected subgraph, on unweighted graphs,  
can be seen as a special case of their problem wherein two graphs $G$ and $H$ are the same, i.e., $E_G=E_H$. 
It is easy to see that unlike the general form of their problem, 
the densest $1$-edge-connected subgraph problem (on unweighted graphs) is polynomial-time solvable. 

Two closely related papers are due to Tsourakakis~\cite{Tsourakakis_15} and Kawase and Miyauchi~\cite{Kawase_Miyauchi_18}. They aim to find a near-clique (which is robust to vertex/edge failure) by extending the densest subgraph problem. 
Tsourakakis~\cite{Tsourakakis_15} introduced the problem called the \emph{$k$-clique densest subgraph problem}. 
In this problem, given an unweighted graph $G=(V,E)$, we are asked to find $S\subseteq V$ that maximizes the \emph{$k$-clique density} $w_k(S)/|S|$, 
where $w_k(S)$ is the number of $k$-cliques (i.e., cliques with size $k$) in $G[S]$. 
Tsourakakis~\cite{Tsourakakis_15} showed that this problem (with constant $k$) remains polynomial-time solvable, 
and later, Mitzenmacher et al.~\cite{Mitzenmacher+15} proposed a scalable algorithm that obtains a nearly-optimal solution.  
On the other hand, Kawase and Miyauchi~\cite{Kawase_Miyauchi_18} introduced the problem called the \emph{$f$-densest subgraph problem with convex $f$}. 
In this problem, given an edge-weighted graph $G=(V,E,w)$, we are asked to find $S\subseteq V$ that maximizes 
$w(S)/f(|S|)$, where $f:\mathbb{Z}_{\geq 0}\rightarrow \mathbb{R}_{\geq 0}$ is a monotonically non-decreasing function 
that satisfies $(f(x+2)-f(x+1))-(f(x+1)-f(x))\geq 0$ for any $x\in \mathbb{Z}_{\geq 0}$.  
This formulation generalizes the NP-hard optimal quasi-cliques problem due to Tsourakakis et al. \cite{tsourakakis2013denser,tsourakakis2015streaming}. 
Kawase and Miyauchi~\cite{Kawase_Miyauchi_18} studied the hardness of the problem, 
and proposed a polynomial-time approximation algorithm. Although the above two problems contribute to computing a dense subgraph that is robust to vertex/edge failure, they cannot explicitly impose $k$-vertex/edge connectivity. 

There are also some variants that take into account the robustness to the uncertainty of input graphs.  Zou~\cite{Zou_13} studied the densest subgraph problem on \emph{uncertain graphs}. 
Uncertain graphs are a generalization of graphs, which can model the uncertainty of the existence of edges. 
More formally, an uncertain graph consists of an unweighted graph $G = (V,E)$ and a function $p: E \rightarrow [0, 1]$, 
where $e \in E$ is present with probability $p(e)$ whereas $e \in E$ is absent with probability $1-p(e)$. 
In the problem introduced by Zou~\cite{Zou_13}, given an uncertain graph $G=(V,E)$ with $p$, 
we are asked to find $S\subseteq V$ that maximizes the expected value of the density. 
Zou~\cite{Zou_13} observed that this problem can be reduced to the original densest subgraph problem, 
and designed polynomial-time exact algorithm using the reduction. 
Very recently, Tsourakakis et al.~\cite{Tsourakakis+19} introduced the problem called the \emph{risk-averse DSD}. 
In this problem, given an uncertain graph $G=(V,E)$ with $p$, 
we are asked to find $S\subseteq V$ that has a large expected density and at the same time has a small \emph{risk}. 
The risk of $S\subseteq  V$ is measured by the probability that $S$ is not dense on a given uncertain graph. 
They showed that the risk-averse DSD can be reduced to the densest subgraph problem with \emph{negative} edge weights (which is NP-hard), 
and designed an efficient approximation algorithm based on the reduction.

Miyauchi and Takeda~\cite{Miyauchi_Takeda_18} considered the uncertainty of edge weights rather than the existence of edges. 
To model that, they assumed that they have an \emph{edge-weight space} $W=\times_{e\in E}[l_e,r_e]\subseteq \times_{e\in E}[0,\infty)$ that contains the unknown true edge weight $w$. 
To evaluate the performance of $S\subseteq V$ without any concrete edge weight, 
they employed a well-known measure in the field of robust optimization, called the \emph{robust ratio}. 
In their scenario, the robust ratio of $S\subseteq V$ under $W$ is defined as the multiplicative gap between the density of $S$ in terms of edge weight $w'$ and the density of $S^*_{w'}$ in terms of edge weight $w'$ under the worst-case edge weight $w'\in W$, 
where $S^*_{w'}$ is a densest subgraph of $G$ with $w'$. 
Intuitively, $S\subseteq V$ with a large robust ratio has a density close to the optimal value even on $G$ with the edge weight selected adversarially from $W$. 
Using the robust ratio, they formulated the \emph{robust densest subgraph problem}, 
where given an unweighted graph $G=(V,E)$ and an edge-weight space $W=\times_{e\in E}[l_e,r_e]\subseteq \times_{e\in E}[0,\infty)$, 
we are asked to find $S\subseteq V$ that maximizes the robust ratio under $W$. 
Miyauchi and Takeda~\cite{Miyauchi_Takeda_18} designed an algorithm that returns $S\subseteq V$ with a robust ratio of at least $\frac{1}{\max_{e\in E}\frac{r_e}{l_e}}$ under some mild condition. 

In addition to the above, there are many other problem variations. 
The most well-studied variants are size restricted ones~\cite{Andersen_Chellapilla_09,Bhaskara+10,Feige+01,Khuller_Saha_09}.
For example, in the \emph{densest $k$-subgraph problem}~\cite{Feige+01}, 
given an edge-weighted graph $G=(V,E,w)$ and a positive integer $k\in \mathbb{Z}_{>0}$,
we are asked to find $S\subseteq V$ that maximizes $d(S)$ subject to the constraint $|S|=k$.
It is known that such a restriction makes the problem much harder;
indeed, the densest $k$-subgraph problem is NP-hard and the best known approximation ratio is $\Omega(1/n^{{1/4}+\epsilon})$ for any $\epsilon>0$~\cite{Bhaskara+10}.
The densest subgraph problem has also been extended to more general computation models and graph structures.
As for computation models, to cope with the dynamics of real-world graphs,
some literature has considered dynamic settings~\cite{Epasto+15,Hu+17},
and moreover, to model the limited computation resources in reality,
some literature has considered streaming settings~\cite{Angel+12,Bahmani+12,Bhattacharya+15}.
As for graph structures, the problem has been defined on hypergraphs~\cite{Hu+17,Miyauchi+15} and multilayer networks~\cite{Galimberti+17}.

\paragraph{Vertex and edge connectivity.}  In the \emph{vertex connectivity problem}, we are asked to compute $\kappa(G)$ for a given graph $G=(V,E)$. 
For this problem, Gabow~\cite{Gabow06} developed an $O(|V|(\kappa(G)^2\cdot \min\{|V|^{3/4}, \kappa(G)^{3/2}\}+\kappa(G) |V|))$-time algorithm, 
which also computes a corresponding \emph{minimum vertex separator} $S\subset V$. 
This is one of the current fastest deterministic algorithms for the problem, 
although there are various randomized algorithms~(e.g., see \cite{Forster+20,Henzinger+00,Linial+88,Nanongkai+19}). 
Note that there are linear-time algorithms that decide whether $G$ is 2-vertex-connected and 3-vertex-connected, respectively, 
due to Tarjan~\cite{Tarjan72} and Hopcroft and Tarjan~\cite{Hopcroft_Tarjan_73}. 

Another important problem related to vertex connectivity is to compute the family of maximal $k$-vertex-connected subgraphs, 
which will be solved in our bicriteria approximation algorithm for Problem~\ref{prob:vertex}. 
For $S\subseteq V$ and $k\in \mathbb{Z}_{>0}$, 
the induced subgraph $G[S]$ is called a \emph{maximal $k$-vertex-connected subgraph}
if $G[S]$ is $k$-vertex-connected and no superset of $S$ has this property. 
For this task, the first polynomial-time algorithm is given by Matula~\cite{Matula77}. 
Note that maximal $k$-vertex-connected subgraphs may overlap each other; 
the design of the algorithm by Matula~\cite{Matula77} is based on the fact 
that the maximum total number of maximal $k$-vertex-connected subgraphs is $O(|V|)$~\cite{Matula77}. 
Later, Makino~\cite{Makino88} designed an $O(|V|\cdot T)$-time algorithm, 
where $T$ is the computation time required to find a vertex separator of size at most $k-1$. 
Combined with the above vertex connectivity algorithm by Gabow~\cite{Gabow06}, 
the algorithm by Makino~\cite{Makino88} yields the running time of $O(|V|^2(k^2\cdot \min\{|V|^{3/4}, k^{3/2}\}+k |V|))$. 
For some special $k$, there are some existing algorithms that have better running time. 
For $k=2$ and 3, there are linear-time algorithms by Tarjan~\cite{Tarjan72} and Hopcroft and Tarjan~\cite{Hopcroft_Tarjan_73}, respectively. 
For any constant $k$, Henzinger et al.~\cite{Henzinger+15} presented an $O(|V|^3)$-time algorithm.

In the \emph{(global) minimum cut problem},  given an edge-weighted graph $G=(V,E,w)$, we are asked to find the minimum weight cut $F\subseteq E$. 
For this problem, Nagamochi and Ibaraki~\cite{Nagamochi_Ibaraki_92} designed an $O(|V|(|E|+|V|\log|V|))$-time algorithm. 
Later, Stoer and Wagner~\cite{Stoer_Wagner_97} and Frank~\cite{Frank94} independently presented a very simple algorithm 
that still has the same running time. 
For simple unweighted graphs, the seminal work by Karger~\cite{Karger00} provides a randomized (Monte Carlo) algorithm 
that runs in nearly-linear, $O(|E|\log^3|V|)$, time. 
As this algorithm does not necessarily return the right answer, 
Karger~\cite{Karger00} posed an open question to find a nearly-linear-time deterministic algorithm. 
In a recent breakthrough, Kawarabayashi and Thorup~\cite{Kawarabayashi_Thorup_18} answered the question; 
they developed a deterministic algorithm for simple unweighted graphs that runs in $O(|E|\log^{12}|V|)$ time. 
Very recently, Henzinger et al.~\cite{Henzinger20} improved the running time to $O(|E|\log^2|V|\log\log^2|V|)$ time, 
which is better even than that of the randomized algorithm by Karger~\cite{Karger00}. 

As in the vertex connectivity case, computing the family of maximal $k$-edge-connected subgraphs is also an important problem, 
which will be solved in our bicriteria approximation algorithm for Problem~\ref{prob:edge}. 
For $S\subseteq V$ and $k\in \mathbb{R}_{>0}$, 
the induced subgraph $G[S]$ is called a \emph{maximal $k$-edge-connected subgraph}
if $G[S]$ is $k$-edge-connected and no superset of $S$ has this property. 
The problem can be solved using any minimum cut algorithm as follows: 
if the weight of the minimum cut of the graph is less than $k$, 
divide the graph into two subgraphs along with the cut and then repeat the procedure on the resulting subgraphs. 
For edge-weighted graphs, we can directly obtain an $O(|V|^2(|E|+|V|\log |V|))$-time algorithm 
using one of the above minimum cut algorithms by Nagamochi and Ibaraki~\cite{Nagamochi_Ibaraki_92}, Stoer and Wagner~\cite{Stoer_Wagner_97}, and Frank~\cite{Frank94}. 
To the best of our knowledge, there is no existing algorithm that has a better running time. 
For simple unweighted graphs, we can again directly obtain an $O(|E||V|\log^2|V|\log\log^2|V|)$-time algorithm 
using the above minimum cut algorithm by Henzinger et al.~\cite{Henzinger20}. 
Unlike the weighted case, for some special $k$, 
there are some existing algorithms that have a better running time. 
For $k=2$, there is a linear-time algorithm by Tarjan~\cite{Tarjan72}. 
For any constant $k$, Henzinger et al.~\cite{Henzinger+15} presented an $O(|V|^2\log|V|)$-time algorithm, 
and more recently, Chechik et al.~\cite{Chechik17} provided an $O(\sqrt{|V|}(|E|+|V|\log|V|))$-time algorithm. 
The latter algorithm is efficient particularly for sparse graphs; 
indeed, the latter is better than the former when $|E|=o(|V|^{3/2}\log |V|)$. 
Very recently, for any $k\in \mathbb{Z}_{>0}$, 
Forster et al.~\cite{Forster+20} developed a randomized (Las Vegas) algorithm that has expected running time 
$O(k^3|V|^{3/2}\log|V|+k|E|\log^2|V|)$, which is faster than the algorithm by Chechik et al.~\cite{Chechik17} 
(for general $k\in \mathbb{Z}_{>0}$).

\section{Mader's theorem and Mader subgraph}
\label{sec:mader} 
In this section, we extend Mader's theorem to edge-weighted graphs and design an algorithm for finding a Mader subgraph. 

\subsection{Mader's Theorem on Edge-Weighted Graphs}

Mader's theorem \cite{Mader72} is a foundational theorem in graph theory.  The precise statement is as follows: 

\begin{theorem}[Mader~\cite{Mader72}; see also Theorem~1.4.3 in Diestel~\cite{Diestel16}]\label{thm:mader}
Let $G=(V,E)$ be an unweighted graph and let $d$ be a positive integer. 
If $G$ has density at least $d$, then $G$ has a $(\lfloor d/2\rfloor + 1)$-vertex-connected subgraph 
wherein the minimum degree of vertices is greater than $d$. 
\end{theorem}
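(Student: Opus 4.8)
The plan is to run the classical extremal-subgraph argument underlying Mader's theorem, which produces the desired subgraph directly rather than via a minimal counterexample. Write $k := \lfloor d/2\rfloor$, so $d\ge 2k$, and recall that ``$G$ has density at least $d$'' means $|E(G)|\ge d\,|V(G)|$. The case $d=1$ is the elementary fact that a graph with $|E|\ge|V|$ contains a cycle (a connected subgraph of minimum degree $2$), so I would assume $d\ge 2$, hence $k\ge 1$. Then I would introduce the family $\mathcal{F}$ of all subgraphs $H\subseteq G$ satisfying $|V(H)|\ge 2k$ and $|E(H)| > d\,(|V(H)|-k)$, observe that it is nonempty because $G$ itself lies in it ($|V(G)|\ge \Delta(G)+1\ge 2d+1\ge 2k$, and $|E(G)|\ge d\,|V(G)| = d\,(|V(G)|-k)+dk > d\,(|V(G)|-k)$ since $dk>0$), and pick $H\in\mathcal{F}$ with $|V(H)|$ minimum. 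The goal is then to show this $H$ is $(k+1)$-vertex-connected with $\delta(H)>d$.

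For the minimum-degree bound, if some $v\in V(H)$ had $\deg_H(v)\le d$, then $|E(H-v)| \ge |E(H)|-d > d\,(|V(H)|-k)-d = d\,(|V(H-v)|-k)$, so $H-v$ still satisfies the edge inequality; if moreover $|V(H)|\ge 2k+1$, it also satisfies the size bound, contradicting minimality of $H$. Hence either $\delta(H)>d$ or $|V(H)|=2k$; but the latter would force $|E(H)| > dk \ge 2k^2 > \binom{2k}{2} \ge |E(H)|$, using $d\ge 2k$, which is impossible. So $\delta(H)>d$, and in particular $|V(H)|\ge \delta(H)+1 \ge d+2$.

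For the connectivity bound, if $\kappa(H)\ge k+1$ there is nothing to prove, so assume $\kappa(H)\le k$; then $H$ is not complete (a complete graph on $|V(H)|\ge d+2\ge k+2$ vertices has connectivity $|V(H)|-1\ge k+1$), so it has a vertex separator $S$ with $|S|=\kappa(H)\le k$, writing $H$ as the union of induced subgraphs $H_1,H_2$ with $V(H_1)\cap V(H_2)=S$ and $V(H_i)\setminus S\neq\emptyset$. A vertex of $V(H_i)\setminus S$ has all of its more than $d$ neighbours inside $V(H_i)$, so $|V(H_i)|\ge d+2 > 2k$, while $|V(H_i)|<|V(H)|$. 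Now $|E(H_1)|+|E(H_2)|\ge|E(H)|$ (the only double-counted edges are those of $H[S]$) and $|V(H_1)|+|V(H_2)|=|V(H)|+|S|$, so if both $H_i\notin\mathcal{F}$, i.e.\ $|E(H_i)|\le d\,(|V(H_i)|-k)$, then $|E(H)| \le d\,(|V(H_1)|+|V(H_2)|-2k) = d\,(|V(H)|+|S|-2k) \le d\,(|V(H)|-k)$, contradicting $H\in\mathcal{F}$. Hence one of $H_1,H_2$ lies in $\mathcal{F}$ with strictly fewer vertices than $H$, contradicting minimality. Therefore $\kappa(H)\ge k+1 = \lfloor d/2\rfloor+1$, and $H$ is the required subgraph.

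I expect the only real friction to be the size bookkeeping: verifying $G\in\mathcal{F}$, ruling out $|V(H)|=2k$ for the minimal $H$ (exactly where the hypothesis $d\ge 2k$ is used), and disposing of the degenerate case $d=1$ separately. The connectivity step is the conceptual core of the argument, but once $\mathcal{F}$ is defined with the offset $k$ inside $d\,(|V(H)|-k)$ it reduces to the single inequality above.
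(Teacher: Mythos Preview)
Your proof is correct and is precisely the classical extremal-subgraph argument (as in Diestel, Theorem~1.4.3). The paper does not give a separate proof of Theorem~\ref{thm:mader}, citing it as known, but its proof of the weighted generalization (Theorem~\ref{thm:our_mader}) follows exactly the same template: pick a vertex-minimal subgraph satisfying a size lower bound and a strict edge-count inequality with an offset, show minimality forces the minimum-degree bound, and then derive connectivity by splitting across a small separator and summing the two edge inequalities. The only cosmetic differences are that the paper uses the actual density $d(V)\ge d$ as the multiplier (yielding minimum weighted degree $>d(V)$) and takes the offset $\lceil d/w_{\max}\rceil/2$ rather than your $k=\lfloor d/2\rfloor$; in the unweighted case these coincide up to the floor.
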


A straightforward application of Theorem~\ref{thm:mader} to edge-weighted graphs would yield the following result. Let $G=(V,E,w)$ be an edge-weighted graph, let $d$ be a positive real, and assume that $G$ has density at least $d$. 
Now consider an unweighted graph $G'=(V,E)$ defined on the same vertex set $V$ and edge set $E$. As $G'$ has the density of at least $d/\wmax$ (i.e., at least $\lfloor d/\wmax\rfloor$), 
by Theorem~\ref{thm:mader}, we see that $G'$ has a $\left(\left\lfloor\frac{\lfloor d/\wmax \rfloor}{2}\right\rfloor+1\right)$-vertex-connected subgraph wherein the minimum degree of vertices is greater than $\lfloor d/\wmax\rfloor$. 
Therefore, we can deduce that $G$ has a $\left(\left\lfloor\frac{\lfloor d/\wmax \rfloor}{2}\right\rfloor+1\right)$-vertex-connected subgraph wherein the minimum weighted degree of vertices is greater than $\wmin \lfloor d/\wmax\rfloor$. 
However, this is weaker than what we need to prove the approximation guarantee of our algorithms, as we discuss in Section~\ref{subsec:remark}. 

Here we provide a stronger version for edge-weighted graphs. Specifically, we prove the following theorem: 

%

\begin{theorem}
\label{thm:our_mader}
Let $G=(V,E,w)$ be an edge-weighted graph  and let $d$ be a positive real. 
If $G$ has density at least $d$, then $G$ has a $\left(\left\lfloor \frac{\lceil d/w_\mathrm{max} \rceil}{2}\right\rfloor+1\right)$-vertex-connected subgraph wherein the minimum weighted degree of vertices is greater than $d$. 
\end{theorem}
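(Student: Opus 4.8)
The plan is to mimic the extremal-minimality proof of Mader's theorem (Diestel's Theorem~1.4.3), but with the governing inequality tracking edge \emph{weights} rather than edge counts, and invoking the rounding $D:=\lceil d/\wmax\rceil$ at a single decisive step. Write $k:=\lfloor D/2\rfloor+1$, so that $D\in\{2k-2,2k-1\}$; we seek an induced subgraph that is $k$-vertex-connected with weighted minimum degree $>d$ (an induced subgraph is in particular a subgraph, and these properties are inherited, so this suffices). First I dispose of the degenerate case $D=1$, i.e.\ $k=1$: take $W\subseteq V$ with $w(W)/|W|\ge d$ and $|W|$ minimum (possible since $W=V$ qualifies, and necessarily $|W|\ge 2$). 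Then $G[W]$ is connected, since otherwise some connected component, on a set $C$ of fewer than $|W|$ vertices, would satisfy $w(C)/|C|\ge w(W)/|W|\ge d$, contradicting minimality; and $G[W]$ has weighted minimum degree $>d$, since deleting a vertex $v$ of weighted degree $\le d$ would leave $W\setminus\{v\}$ with $w(W\setminus\{v\})\ge w(W)-d\ge d(|W|-1)$, again contradicting minimality. From now on assume $D\ge 2$, so $k\ge 2$.

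For the main case, let $\mathcal H$ be the family of sets $W\subseteq V$ with $|W|\ge 2k-1$ and $w(W)>d\,(|W|-k+1)$, and choose $W\in\mathcal H$ with $|W|$ minimum. That $\mathcal H\ne\emptyset$ holds because the hypothesis $w(V)\ge d|V|$ gives $w(V)>d(|V|-k+1)$ at once (as $d(k-1)>0$), while combining $w(V)\ge d|V|$ with $w(V)\le\binom{|V|}{2}\wmax$ and $d/\wmax>D-1$ forces $|V|>2D-1\ge 2k-1$; hence $V\in\mathcal H$. Put $H:=G[W]$, and distinguish two cases according to $|W|$.

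Suppose $|W|\ge 2k$. Then $H$ has weighted minimum degree $>d$: if some vertex $v$ had weighted degree $\le d$ in $H$, then $W\setminus\{v\}$ would still lie in $\mathcal H$ (its weight drops by at most $d$ and its size by exactly one, while $|W|-1\ge 2k-1$), contradicting minimality. And $H$ is $k$-vertex-connected: otherwise some separator $S$ with $|S|\le k-1$ splits $W$ into $W_1,W_2$ with each $W_i\setminus S\ne\emptyset$; each vertex of $W_i\setminus S$ has the same weighted and unweighted degree in $G[W_i]$ as in $H$, so its weighted degree exceeds $d$, hence its unweighted degree exceeds $d/\wmax$, and being an integer it is therefore at least $\lceil d/\wmax\rceil=D\ge 2k-2$---this is the one point where the ceiling enters---giving $|W_i|\ge D+1\ge 2k-1$. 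Minimality of $W$ then forces $w(W_i)\le d(|W_i|-k+1)$ for $i=1,2$; since every edge of $H$ lies in $G[W_1]$ or $G[W_2]$ and $|W_1|+|W_2|\le|W|+(k-1)$, summing gives $w(W)\le d(|W|-k+1)$, a contradiction. So in this case $H$ is the subgraph we want.

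The case $|W|=2k-1$ is the one I expect to be delicate, because there the minimality argument is vacuous for $W_1,W_2$ (both automatically violate the size requirement $|W_i|\ge 2k-1$), so $H$ must be analyzed directly. I would first note that this case can occur only when $D=2k-2$: if $D=2k-1$, then $w(W)>dk$, $w(W)\le\binom{2k-1}{2}\wmax$, and $d/\wmax>D-1=2k-2$ are jointly contradictory. When $D=2k-2$ we have $(2k-3)\wmax<d$, and then $w(W)>dk$ forces $|E(W)|>dk/\wmax>(2k-3)k$, i.e.\ $|E(W)|\ge\binom{2k-1}{2}$, so $H=K_{2k-1}$, which is $(2k-2)$-vertex-connected and hence $k$-vertex-connected; moreover, for every vertex $v$ the weighted degree equals $w(W)-w(W\setminus\{v\})>dk-\binom{2k-2}{2}\wmax=dk-(k-1)(2k-3)\wmax>dk-(k-1)d=d$, so $H$ again has weighted minimum degree $>d$, and $H$ itself works. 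The only steps requiring genuine care are this direct handling of $|W|=2k-1$ (including the observation that it forces $D$ to be even), checking that every inequality of the unweighted argument survives the replacement of edge counts by weights, and the single clean step that turns ``weighted degree $>d$'' into ``unweighted degree $\ge\lceil d/\wmax\rceil$''---which is precisely what yields the improved connectivity bound over the naive reduction to the unweighted Mader theorem.
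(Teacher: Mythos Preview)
Your proof is correct and follows essentially the same Mader-style extremal-minimality argument as the paper. The only difference is parametrization: the paper writes the governing inequality with the (possibly half-integer) threshold $D/2$ and the actual density $d(V)$ in place of your $k-1$ and $d$, which lets it treat all cases uniformly and sidestep your separate boundary analyses of $D=1$ and of $|W|=2k-1$.
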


\begin{proof}
Let $H=(S,E(S))$ be a subgraph of $G$ with the minimum number of vertices that satisfies
\begin{align}\label{ineq:conditions}
\begin{aligned}
|S|\geq \lceil d/\wmax \rceil \quad \text{and} \quad w(S)>d(V)\left(|S|-\frac{\lceil d/\wmax \rceil}{2}\right). 
\end{aligned}
\end{align}
There exists such a subgraph $H$ because $G$ itself satisfies the above condition. 
In fact, since $d(V)\geq d$ holds, there exists a vertex with the weighted degree of at least $2d$, 
implying that the number of neighbors of such a vertex is at least $\lceil 2d/\wmax \rceil$, thus $|V|\geq \lceil 2d/\wmax \rceil + 1 > \lceil d/\wmax \rceil$ holds, and $w(V)=d(V)|V|>d(V)\left(|V|-\frac{\lceil d/\wmax \rceil}{2}\right)$. 
Suppose that $|S|=\lceil d/\wmax \rceil$. Then we have 
\begin{align*}
w(S)
&>d(V)\left(|S|-\frac{\lceil d/\wmax \rceil}{2}\right)
=\frac{d(V)\lceil d/\wmax \rceil}{2}\\
&\geq \frac{\wmax (d/\wmax) \lceil d/\wmax \rceil}{2}
> \wmax{\lceil d/\wmax \rceil \choose 2}=\wmax{|S|\choose 2}\geq w(S), 
\end{align*}
a contradiction. Therefore, we see that $|S|\geq \lceil d/\wmax \rceil+1$. 
Suppose also that there exists a vertex $v$ in $H$ whose weighted degree is at most $d(V)$ in $H$. 
Let $H'=(S', E(S'))$ be a subgraph constructed by removing $v$ from $H$. 
Then we have 
\begin{align*}
|S'|&=|S|-1\geq \lceil d/\wmax \rceil\quad \text{and}\quad \\
w(S')&\geq w(S)-d(V)>d(V)\left(|S|- \frac{\lceil d/\wmax \rceil}{2} -1\right)=d(V)\left(|S'|-\frac{\lceil d/\wmax \rceil}{2}\right). 
\end{align*}
This means that $H'$ also satisfies condition~(\ref{ineq:conditions}), which contradicts the minimality of $H$. 
Therefore, we see that every vertex in $H$ has weighted degree greater than $d(V)\geq d$ in $H$. 

From now on, we show that $H$ is $\left(\left\lfloor \frac{\lceil d/\wmax \rceil}{2}\right\rfloor +1\right)$-vertex-connected. 
Suppose, for contradiction, that there exists $T\subseteq S$ with $|T|\leq \left\lfloor \frac{\lceil d/\wmax \rceil}{2}\right\rfloor$ 
whose removal separates $H$ into two non-empty subgraphs $H[S_1]$ and $H[S_2]$ so that there are no edges between them. 
For any vertex $v\in S_1$, its neighbors in $H$ are all contained in $S_1\cup T$. 
As $v$ has weighted degree greater than $d(V)\geq d$ in $H$, the number of neighbors of $v$ in $S_1\cup T$ is at least $\lceil d/\wmax \rceil$, thus $|S_1\cup T|\geq \lceil d/\wmax \rceil+1$. 
From the minimality of $H$, we see that the subgraph $H[S_1\cup T]$ does not satisfy condition~(\ref{ineq:conditions}), 
which implies that 
\begin{align*}
w(S_1\cup T)\leq d(V)\left(|S_1\cup T|-\frac{\lceil d/\wmax \rceil}{2}\right)
\end{align*}
holds. 
Applying the same argument to $S_2$, we also have
\begin{align*}
w(S_2\cup T)\leq d(V)\left(|S_2\cup T|-\frac{\lceil d/\wmax \rceil}{2}\right). 
\end{align*}
Combining these two inequalities, we have 
\begin{align*}
w(S)&\leq w(S_1\cup T)+w(S_2\cup T)\\
&\leq d(V)(|S_1\cup T|+|S_2\cup T|-\lceil d/\wmax \rceil)\\
&=d(V)(|S_1|+|T|+|S_2|+|T|-\lceil d/\wmax \rceil)\\
&\leq d(V)\left(|S|-\frac{\lceil d/\wmax \rceil}{2}\right), 
\end{align*}
which contradicts that $H$ satisfies condition~(\ref{ineq:conditions}). 
\end{proof}

\subsection{Algorithm for Finding a Mader Subgraph}
We design an algorithm  \texttt{Mader\_subgraph} that extracts a Mader subgraph, i.e., the subgraph whose existence is guaranteed by Theorem~\ref{thm:our_mader}.  To this end, we first present a simple subprocedure, which we call \texttt{Peel}. For an edge-weighted graph $G=(V,E,w)$ and a positive real $d$, 
the procedure $\texttt{Peel}$ returns the maximal subgraph of $G$ 
wherein the minimum weighted degree of vertices is greater than $d$ if such a subgraph exists 
and \texttt{Null} otherwise. 
Specifically, $\texttt{Peel}$ iteratively removes a vertex with the minimum weighted degree in the currently remaining graph 
while the minimum weighted degree is no greater than $d$. 
Note that this procedure is similar to the procedure to find a \emph{$k$-core}. 
For reference, we describe the entire procedure in Algorithm~\ref{alg:peel}, 
where $\mathrm{deg}_S(v)$ for $S\subseteq V$ and $v\in S$ denotes the weighted degree of $v$ in $G[S]$. 
This algorithm can be implemented to run in $O(|E|+|V|\log |V|)$ time, as mentioned in the literature~\cite{Miyauchi+15}. 
\begin{algorithm}[t]
\caption{\texttt{Peel$(G,d)$}}\label{alg:peel}
\SetKwInOut{Input}{Input}
\SetKwInOut{Output}{Output}
\Input{\ $G=(V,E,w)$ and $d\in \mathbb{R}_{>0}$}
\Output{\ Subgraph of $G$ or \texttt{Null}}
$S\leftarrow V$\;
\While{True}{
$v_\text{min}\leftarrow \argmin_{v\in S}\text{deg}_S(v)$\;
\If{$\mathrm{deg}_S(v_\mathrm{min})> d$}{\Return $G[S]$\;}
$S\leftarrow S\setminus \{v_\text{min}\}$\;
}
\Return{\texttt{Null}}\;
\end{algorithm}

Using Algorithm~\ref{alg:peel}, we present \texttt{Mader\_subgraph} in Algorithm~\ref{alg:mader}, 
where the notation $V(H')$ denotes the vertex set of subgraph $H'$ of $G$. 
Here we briefly explain the behavior of the algorithm. 
Let $G^*$ be a Mader subgraph of a given edge-weighted graph $G$. 
The algorithm keeps a family of subgraphs $\mathcal{H}$ in which exactly one subgraph contains $G^*$ as its subgraph. 
In each iteration, the algorithm tests whether a subgraph in $\mathcal{H}$ is a Mader subgraph or not, 
and if not, the algorithm divides the subgraph into strictly smaller pieces and add (a part of) them to $\mathcal{H}$. 
The algorithm repeats this operation until it finds a Mader subgraph. 
It should be noted that our algorithm is based on Matula's algorithm~\cite[Algorithm~A]{Matula78}, 
which finds the most highly connected subgraph in terms of vertex connectivity, 
i.e., $H\in \argmax\{\kappa(H)\mid \text{$H$ is a subgraph of $G$}\}$. 
\begin{algorithm}[t]
\caption{\texttt{Mader\_subgraph$(G)$}}\label{alg:mader}
\SetKwInOut{Input}{Input}
\SetKwInOut{Output}{Output}
\Input{\ $G=(V,E,w)$}
\Output{\ Subgraph of $G$}
$H\leftarrow \texttt{Peel}(G, d(V))$\;
$\tau\leftarrow \left\lfloor \frac{\lceil d(V)/\wmax \rceil}{2}\right\rfloor +1$\;
$\mathcal{H} \leftarrow \text{the family of the maximal connected subgraphs of $H$ that have at least $\tau+1$ vertices}$\;
\If{there exists a clique $K$ in $\mathcal{H}$}{\Return{$K$}\;}
\While{True}{
$H'\leftarrow \text{an arbitrary element of } \mathcal{H}$\;
$C\leftarrow \text{the minimum vertex separator of $H'$}$\;
\If{$|C|\geq \tau$}{\Return{$H'$}\;}
$\mathcal{S}\leftarrow \text{the family of the vertex sets of the maximal connected subgraphs of $G[V(H')\setminus C]$}$\;
$\mathcal{H}'\leftarrow \emptyset$\;
\For{each $S\in \mathcal{S}$}{
\If{$\texttt{Peel}(G[S\cup C],d(V))$ has at least $\tau+1$ vertices}{$\mathcal{H'}\leftarrow \mathcal{H'}\cup \{\texttt{Peel}(G[S\cup C],d(V))\}$\;}
}
\If{there exists a clique $K$ in $\mathcal{H}'$}{\Return{K}\;}
$\mathcal{H} \leftarrow (\mathcal{H}\setminus \{H'\}) \cup \mathcal{H'}$\;
}
\end{algorithm}

The following theorem verifies the validity of \texttt{Mader\_subgraph}.  The proof strategy is similar to that for Matula~\cite[Theorem~3]{Matula78}.  

\begin{theorem}
For a given edge-weighted graph $G=(V,E,w)$, Algorithm~\ref{alg:mader} outputs a Mader subgraph of $G$ in $O(|V|^{19/4})$ time. 
\end{theorem}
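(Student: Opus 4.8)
The plan is to establish two things: \emph{correctness} --- that every value the algorithm can possibly return is a Mader subgraph of $G$, i.e., a $\tau$-vertex-connected (induced) subgraph whose minimum weighted degree exceeds $d(V)$, where $\tau=\lfloor\lceil d(V)/\wmax\rceil/2\rfloor+1$ is the quantity computed by the algorithm --- and the \emph{running time}. I fix one Mader subgraph $G^*$ of $G$; it exists by Theorem~\ref{thm:our_mader} applied with $d=d(V)$, and I may take it to be an induced subgraph since replacing $G^*$ by $G[V(G^*)]$ only adds edges and hence preserves both $\tau$-vertex-connectivity and the degree lower bound. The argument uses the standard ``$k$-core'' monotonicity of \texttt{Peel}: for any edge-weighted graph $\Gamma$ and real $d$, the output $\texttt{Peel}(\Gamma,d)$ is the unique maximal induced subgraph of $\Gamma$ with minimum weighted degree greater than $d$, and it contains every induced subgraph of $\Gamma$ with that property (it is \texttt{Null} exactly when none exists); in particular $\texttt{Peel}(G,d(V))\neq\texttt{Null}$ because it contains $G^*$.

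For correctness, observe first that every subgraph that ever enters $\mathcal{H}$ or $\mathcal{H}'$ is either a connected component of $\texttt{Peel}(G,d(V))$ or an output $\texttt{Peel}(G[S\cup C],d(V))$; in both cases it is an induced subgraph of $G$ on at least $\tau+1$ vertices in which every vertex has weighted degree greater than $d(V)$ (for a component of $\texttt{Peel}(G,d(V))$ this holds because all its incident edges stay inside it). Hence, if a clique $K$ is returned, then $\kappa(K)=|V(K)|-1\geq\tau$ and $K$ has minimum weighted degree greater than $d(V)$, so $K$ is a Mader subgraph; and if $H'$ is returned because its minimum vertex separator $C$ satisfies $|C|\geq\tau$, then --- since the clique tests preceding each selection guarantee that the picked element is not a clique --- we have $\kappa(H')=|C|\geq\tau$, so again $H'$ is a Mader subgraph. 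To show the algorithm does return, I maintain the invariant that some element of $\mathcal{H}$ contains $G^*$. It holds initially: $G^*\subseteq\texttt{Peel}(G,d(V))$ by monotonicity, and being connected with at least $\tau+1$ vertices it lies inside one component, which is therefore in $\mathcal{H}$. It is preserved: the only nontrivial case is that the chosen $H'$ contains $G^*$ and $|C|<\tau$; then $|C\cap V(G^*)|\leq\tau-1$ and $G^*$ is $\tau$-vertex-connected, so $G^*-C$ is a nonempty connected subgraph of $G[V(H')\setminus C]$ and hence lies in a single component $S$, giving $V(G^*)\subseteq S\cup C$ and, by monotonicity, $G^*\subseteq\texttt{Peel}(G[S\cup C],d(V))$, which therefore has at least $\tau+1$ vertices and is added to $\mathcal{H}'$ (or is a clique and is returned). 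In particular $\mathcal{H}$ never empties, so ``an arbitrary element of $\mathcal{H}$'' is always well defined.

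For termination and the iteration count, introduce the potential $\Phi=\sum_{H'\in\mathcal{H}}(|V(H')|-\tau)$, a sum of positive integers since each element of $\mathcal{H}$ has at least $\tau+1$ vertices; initially the elements are pairwise vertex-disjoint components of $\texttt{Peel}(G,d(V))$, so $\sum_{H'\in\mathcal{H}}|V(H')|\leq|V|$ and thus $\Phi\leq|V|$. In any iteration that does not return, $H'$ is replaced by the subgraphs of $\mathcal{H}'$, each an induced subgraph of some $S_i\cup C$ where $S_1,\dots,S_m$ ($m\geq2$ since $C$ separates) are the components of $H'-C$ with $\sum_i|S_i|=|V(H')|-|C|$; only indices with $|S_i\cup C|>\tau$ contribute, and using $|C|\leq\tau-1$ together with $m\geq2$ one checks $\sum_{K\in\mathcal{H}'}(|V(K)|-\tau)<|V(H')|-\tau$, so $\Phi$ strictly decreases by a positive integer. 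Hence there are at most $|V|$ non-returning iterations, i.e., $O(|V|)$ iterations total. The cost of one iteration is dominated by computing the minimum vertex separator $C$ of $H'$ (and $\kappa(H')$), which Gabow's algorithm~\cite{Gabow06} does in time $O(|V|(\kappa^2\min\{|V|^{3/4},\kappa^{3/2}\}+\kappa|V|))$ with $\kappa=\kappa(H')\leq|V|$; this expression is, up to constants, maximized at $\kappa=\Theta(|V|)$, where the minimum equals $|V|^{3/4}$, giving $O(|V|^{15/4})$. All remaining work in an iteration --- the $\texttt{Peel}$ calls on the $O(|V|)$ graphs $G[S_i\cup C]$, the connected-component computations, and the clique tests --- costs $O(|V|^3)$ and is dominated. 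Multiplying $O(|V|^{15/4})$ per iteration by the $O(|V|)$ iterations yields the claimed $O(|V|^{19/4})$ bound.

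The part I expect to require the most care is the correctness bookkeeping: verifying that \emph{every} return statement outputs a genuine Mader subgraph --- which hinges on the facts that the picked $H'$ is never a clique (so $\kappa(H')=|C|$), that minimum weighted degrees are preserved both inside components of $\texttt{Peel}(G,d(V))$ and inside $\texttt{Peel}$ outputs, and that a returned clique on at least $\tau+1$ vertices is automatically $\tau$-vertex-connected --- together with the exact inequality $\sum_{K\in\mathcal{H}'}(|V(K)|-\tau)<|V(H')|-\tau$ underlying the potential decrease. The running-time estimate is then a routine substitution into Gabow's bound.
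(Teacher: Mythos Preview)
Your proof is correct and follows essentially the same approach as the paper: a potential on $\mathcal{H}$ that strictly decreases in every non-returning iteration, combined with Gabow's $O(|V|^{15/4})$ vertex-separator bound per iteration. The only cosmetic differences are that the paper uses the surplus $\sum_{H\in\mathcal{H}}(|V(H)|-\tau-1)$ instead of your $\sum_{H\in\mathcal{H}}(|V(H)|-\tau)$ and dispenses with the correctness verification by declaring it ``easy to see''; your added care in checking that returned cliques and non-clique $H'$ with $|C|\ge\tau$ really are Mader subgraphs, and that the invariant $G^*\subseteq$ (some element of $\mathcal{H}$) keeps $\mathcal{H}$ nonempty, fills in details the paper leaves implicit but does not change the route.
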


\begin{proof}
It is easy to see that if the algorithm terminates, its output is a Mader subgraph of $G$. 
Thus, in what follows, we analyze the time complexity of the algorithm. 

Specifically, we prove that Algorithm~\ref{alg:mader} runs in $O(|V|^{19/4})$ time. 
The time complexity of the algorithm except for the while-loop is given by $O(|E|+|V|\log|V|)$ 
due to the time complexity of the procedure \texttt{Peel}. 
We can show that the time complexity of the while-loop is given by $O(|V|^{19/4})$. 
To see this, we analyze the time complexity of each iteration and the number of iterations. 
The time complexity of each iteration is dominated by that required to compute the minimum vertex separator $C$ of $H'$. 
As reviewed in Section~\ref{sec:related}, the current best algorithm completes this task in $O(|V(H')|(|C|^2\cdot \min\{|V(H')|^{3/4}, |C|^{3/2}\}+|C| |V(H')|))$ time. 
Hence, the time complexity of each iteration is bounded by $O(|V|^{15/4})$. 
Next we show that the number of iterations of the while-loop is bounded by $|V|$. 
Let $G^*$ be a Mader subgraph of $G$, that is, 
$G^*$ is a $\tau$-vertex-connected subgraph of $G$ wherein the minimum weighted degree of vertices is greater than $d(V)$. 
It is easy to see that exactly one subgraph in $\mathcal{H}$ contains $G^*$ as its subgraph in any iteration of the while-loop. 
Here we define the \emph{surplus} of $\mathcal{H}$ as
\begin{align*}
s(\mathcal{H})=\sum_{H\in \mathcal{H}}(|V(H)|-\tau-1). 
\end{align*}
For the initial $\mathcal{H}$, we have $s(\mathcal{H})\leq |V|-\tau -1$. 
Note that $s(\mathcal{H})\geq 0$ holds in any iteration. 
Let us consider an arbitrary iteration in which the algorithm does not terminate. 
Let $\mathcal{S}'=\{S\in \mathcal{S}\mid |V(\texttt{Peel}(G[S\cup C],d(V)))|\geq \tau +1\}$. 
If $|\mathcal{S}'|\leq 1$ holds, then $H'$ is simply deleted or replaced by a subgraph with at most $|V(H')|-1$ vertices, 
in the updated $\mathcal{H}$. 
Thus, the surplus decreases by at least one in the iteration.  
Assume that $|\mathcal{S}'|\geq 2$. 
Then we have 
\begin{align*}
\sum_{H\in \mathcal{H'}}(|V(H)|-\tau-1)
&=\sum_{S\in \mathcal{S}'}(|V(\texttt{Peel}(G[S\cup C],d(V)))|-\tau-1)\\
&\leq \sum_{S\in \mathcal{S}'}(|V(G[S\cup C])|-\tau-1)\\
&\leq |V(H')|+(|\mathcal{S}'|-1)(|C|-\tau)-\tau-|\mathcal{S}'|\\
&<|V(H')|-\tau-2, 
\end{align*}
where the last inequality follows from $|\mathcal{S}'|\geq 2$ and $|C|<\tau$. 
Note that $|C|<\tau$ holds because the algorithm has not yet terminated in the iteration. 
The above inequality implies that the surplus decreases by at least two in the iteration. 
Therefore, the number of iterations of the while-loop is bounded by $|V|-\tau < |V|$. 
\end{proof}

\section{Bicriteria Approximation Algorithms}
\label{sec:biapprox} 

In this section, we first design a polynomial-time $\left(\frac{\gamma}{4}\cdot \frac{\wmin}{\wmax},1/\gamma\right)$-bicriteria approximation algorithm with parameter $\gamma \in [1,2]$ for Problem~\ref{prob:vertex}, and then present a corresponding result for Problem~\ref{prob:edge}.

\subsection{Algorithm for Problem~\ref{prob:vertex}}
For a given edge-weighted graph $G=(V,E,w)$, our algorithm first finds the family of maximal $k$-vertex-connected subgraphs 
$\{G[S_1],\dots, G[S_p]\}$ 
using Makino's algorithm~\cite{Makino88} combined with Gabow's vertex connectivity algorithm~\cite{Gabow06}, 
which takes $O(|V|^2(k^2\cdot \min\{|V|^{3/4}, k^{3/2}\}+k |V|))$ time.  
Note that if there is no $k$-vertex-connected subgraph found, 
our algorithm returns \texttt{INFEASIBLE} because the instance is actually infeasible. 

For each $i=1,\dots, p$, the algorithm initializes $S^*_i$ as $S_i$. 
Then the algorithm finds a densest subgraph $S^\text{DS}_i$ (without any constraint) in $G[S_i]$. 
This can be done in polynomial time using Charikar's linear-programming-based algorithm for the densest subgraph problem~\cite{Charikar00}. 
After that, if 
$k\leq \gamma \left(\left\lfloor \frac{\lceil d(S^\text{DS}_i)/\wmax \rceil}{2}\right\rfloor +1\right)$ holds, 
then the algorithm employs as $S^*_i$ 
the vertex set of a Mader subgraph of $G[S^\text{DS}_i]$, i.e., the vertex set of a $\left(\left\lfloor \frac{\lceil d(S^\text{DS}_i)/\wmax(G[S^\text{DS}_i]) \rceil}{2}\right\rfloor +1\right)$-vertex-connected subgraph 
in $G[S^\text{DS}_i]$ wherein the minimum weighted degree of vertices is greater than $d(S^\text{DS}_i)$, using the procedure \texttt{Mader\_subgraph} (Algorithm~\ref{alg:mader}). 
Here $\wmax(G[S^\text{DS}_i])$ denotes the maximum weight of edges in $G[S^\text{DS}_i]$. 
Note that $\wmax(G[S^\text{DS}_i])\leq \wmax$ holds. 
For $G[S^\text{DS}_i]$, \texttt{Mader\_subgraph} runs in $O(|S^\text{DS}_i|^{19/4})=O(|V|^{19/4})$ time.

Finally, the algorithm outputs the densest subset among $\{S^*_1,\dots, S^*_p\}$. 
For reference, we summarize the entire procedure in Algorithm~\ref{alg:biapprox_vertex}. 
As the maximum total number of maximal $k$-vertex-connected subgraphs is $O(|V|)$~\cite{Matula77}, 
the overall running time of Algorithm~\ref{alg:biapprox_vertex} is given by $O(|V|(|V|^{19/4}+T_\text{DS}))$, 
where $T_\text{DS}$ is the computation time required to find a densest subgraph in (any subgraph of) $G$. 
Note that as mentioned above, $T_\text{DS}$ is polynomial in $|V|$ and $|E|$. 
Moreover, for unweighted graphs, Goldberg's flow-based algorithm~\cite{Goldberg84} gives $T_\text{DS}=O(|E||V|)$, using Orlin's maximum-flow algorithm \cite{orlin2013max}. 

\begin{algorithm}[t]
\caption{Bicriteria approximation algorithm with parameter $\gamma \in [1,2]$ for Problem~\ref{prob:vertex}}
\label{alg:biapprox_vertex}
\SetKwInOut{Input}{Input}
\SetKwInOut{Output}{Output}
\Input{\ $G=(V,E,w)$ and $k\in \mathbb{Z}_{>0}$}
\Output{\ $S\subseteq V$ or \texttt{INFEASIBLE}}
Find the family of maximal $k$-vertex-connected subgraphs $\{G[S_1],\dots, G[S_p]\}$\;
\If{there is no $k$-vertex-connected subgraph found}{
\Return \texttt{INFEASIBLE}\;
}
\Else{
\For{$i=1,\dots, p$}{
$S^*_i\leftarrow S_i$\;
Find a densest subgraph $S^\text{DS}_i$ (without any constraint) in $G[S_i]$\;
\If{$k\leq \gamma \left(\left\lfloor \frac{\lceil d(S^\mathrm{DS}_i)/\wmax \rceil}{2}\right\rfloor +1\right)$}{
$S^*_i \leftarrow \text{The vertex set of } \texttt{Mader\_subgraph}(G[S^\text{DS}_i])$\; 
}
}
\Return $S\in \argmax_{S\in \{S^*_1,\dots, S^*_p\}}d(S)$\;
}
\end{algorithm}

\subsection{Analysis}
Using our generalized Mader's theorem (Theorem~\ref{thm:our_mader}), we provide the bicriteria approximation ratio of Algorithm~\ref{alg:biapprox_vertex}: 

\begin{theorem}\label{thm:biapprox_vertex}
For any $\gamma\in [1,2]$, Algorithm~\ref{alg:biapprox_vertex} is a polynomial-time $\left(\frac{\gamma}{4}\cdot \frac{\wmin}{\wmax},1/\gamma\right)$-bicriteria approximation algorithm 
for Problem~\ref{prob:vertex}. 
\end{theorem}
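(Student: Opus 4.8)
The plan is to establish the two bicriteria guarantees separately, since polynomial running time has already been argued (the dominant cost is $O(|V|(|V|^{19/4}+T_\text{DS}))$) and feasibility is handled by the fact that a $k$-vertex-connected subgraph is found precisely when one exists. So I would fix an optimal solution $S^\text{OPT}$ to Problem~\ref{prob:vertex}, write $\OPT=d(S^\text{OPT})$, and prove: (i) every candidate $S^*_i$ computed by Algorithm~\ref{alg:biapprox_vertex} induces a $(k/\gamma)$-vertex-connected subgraph, hence so does the returned set; and (ii) the returned set has density at least $\frac{\gamma}{4}\cdot\frac{\wmin}{\wmax}\OPT$.

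For (i) I would inspect each $S^*_i$. If the threshold test failed, then $S^*_i=S_i$, which is a maximal $k$-vertex-connected subgraph, hence $(k/\gamma)$-vertex-connected because $\gamma\ge 1$. If the test passed, $S^*_i$ is the vertex set of a Mader subgraph of $G[S^\text{DS}_i]$, so by Theorem~\ref{thm:our_mader} it is $\bigl(\lfloor \lceil d(S^\text{DS}_i)/\wmax(G[S^\text{DS}_i])\rceil/2\rfloor+1\bigr)$-vertex-connected; since $\wmax(G[S^\text{DS}_i])\le\wmax$ this is at least $\lfloor \lceil d(S^\text{DS}_i)/\wmax\rceil/2\rfloor+1$, which by the passed test is at least $k/\gamma$. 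Either way the relaxed constraint holds.

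For (ii), since every $k$-vertex-connected induced subgraph is contained in a maximal one, there is an index $j$ with $S^\text{OPT}\subseteq S_j$, and as $S^\text{DS}_j$ is a densest subgraph of $G[S_j]$ we get $d(S^\text{DS}_j)\ge\OPT$. Now split on the threshold test at index $j$. If it passed, $S^*_j$ induces a subgraph with minimum weighted degree greater than $d(S^\text{DS}_j)$; using $d(S)\ge\frac12\min_{v\in S}\deg_S(v)$ this gives $d(S^*_j)>\frac12 d(S^\text{DS}_j)\ge\frac12\OPT\ge\frac{\gamma}{4}\cdot\frac{\wmin}{\wmax}\OPT$, where the last inequality uses $\gamma\le 2$ and $\wmin\le\wmax$. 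If it failed, $S^*_j=S_j$ and $k>\gamma\bigl(\lfloor \lceil d(S^\text{DS}_j)/\wmax\rceil/2\rfloor+1\bigr)>\gamma\cdot\frac{d(S^\text{DS}_j)}{2\wmax}$, so $\OPT\le d(S^\text{DS}_j)<\frac{2k\wmax}{\gamma}$; on the other hand $G[S_j]$ is $k$-vertex-connected, so $\delta(G[S_j])\ge\kappa(G[S_j])\ge k$ by Whitney's inequality, hence every vertex of $G[S_j]$ has weighted degree at least $k\wmin$ and $d(S_j)\ge\frac{k\wmin}{2}>\frac{\gamma\OPT}{2\wmax}\cdot\frac{\wmin}{2}=\frac{\gamma}{4}\cdot\frac{\wmin}{\wmax}\OPT$. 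In both cases $d(S^*_j)\ge\frac{\gamma}{4}\cdot\frac{\wmin}{\wmax}\OPT$, and since the algorithm returns the densest of the $S^*_i$, the returned set has at least this density; combined with (i) this proves the theorem.

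The main obstacle is the failed-test case of the density analysis: the key realization is that the failure of the algorithm's threshold comparison is itself quantitative — it forces $\OPT\le d(S^\text{DS}_j)<2k\wmax/\gamma$ — which is exactly what makes the otherwise crude lower bound $d(S_j)\ge k\wmin/2$, coming from the minimum-degree consequence of $k$-vertex-connectivity, suffice to beat $\frac{\gamma}{4}\cdot\frac{\wmin}{\wmax}\OPT$. Everything else is routine manipulation of floors and ceilings, the inequalities $\wmin\le\wmax$ and $\gamma\in[1,2]$, and the bound $d(S)\ge\frac12\min_{v\in S}\deg_S(v)$.
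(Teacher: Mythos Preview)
Your proposal is correct and follows essentially the same route as the paper: the paper likewise splits into the two threshold cases, uses Theorem~\ref{thm:our_mader} for connectivity and for the minimum-weighted-degree-$>d(S^\text{DS}_i)$ bound in the pass case, and in the fail case uses $k$-vertex-connectivity of $S_i$ to lower-bound weighted degrees by $\wmin k$ and compares against $d(S^\text{DS}_i)$ via the same floor/ceiling inequality. The only cosmetic difference is that the paper proves $d(S^*_i)\ge \frac{\gamma}{4}\cdot\frac{\wmin}{\wmax}\,\OPT_i$ for every $i$ (with $\OPT_i$ the optimum on $G[S_i]$), whereas you single out the index $j$ containing the global optimum; the two arguments are interchangeable.
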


\begin{proof}
We first show that the output of Algorithm~\ref{alg:biapprox_vertex} is $(k/\gamma)$-vertex-connected. 
To this end, it suffices to confirm $(k/\gamma)$-vertex-connectivity of $G[S^*_i]$ for each $i=1,\dots, p$. 
Fix $i\in \{1,\dots, p\}$. 
If $k\leq \gamma \left(\left\lfloor \frac{\lceil d(S^\text{DS}_i)/\wmax \rceil}{2}\right\rfloor +1\right)$ does not hold, 
we are done since $G[S^*_i]$ is given by $G[S_i]$, which is $k$-vertex-connected 
(thus $(k/\gamma)$-vertex-connected). 
Consider the case where $k\leq \gamma \left(\left\lfloor \frac{\lceil d(S^\text{DS}_i)/\wmax \rceil}{2}\right\rfloor +1\right)$ holds. 
Applying Theorem~\ref{thm:our_mader} to $G[S^\text{DS}_i]$ with setting $d=d(S^\text{DS}_i)$, 
we see that $G[S^\text{DS}_i]$ has a $\left(\left\lfloor \frac{\lceil d(S^\text{DS}_i)/w_\mathrm{max}(G[S^\text{DS}_i]) \rceil}{2}\right\rfloor +1\right)$-vertex-connected subgraph, 
which is $(k/\gamma)$-vertex-connected. 
Algorithm~\ref{alg:biapprox_vertex} employs such a subset as $S^*_i$. 

We next analyze the first term of the bicriteria approximation ratio. 
It suffices to show that for each $i=1,\dots, p$, the subset $S^*_i$ has density at least $\frac{\gamma}{4}\cdot\frac{\wmin}{\wmax}$ times the optimal value of Problem~\ref{prob:vertex} on $G[S_i]$. 
Fix $i\in \{1,\dots, p\}$. 
Clearly, the optimal value of Problem~\ref{prob:vertex} on $G[S_i]$, which we denote by $\textsf{OPT}_i$, 
is at most $d(S^\text{DS}_i)$. 

We first consider the case where $k\leq \gamma \left(\left\lfloor \frac{\lceil d(S^\text{DS}_i)/\wmax \rceil}{2}\right\rfloor +1\right)$ does not hold. 
In this case, Algorithm~\ref{alg:biapprox_vertex} just employs $S_i$ as $S^*_i$. 
As $G[S_i]$ is $k$-vertex-connected, each vertex has weighted degree of at least $\wmin \cdot k > \gamma \cdot \wmin \left(\left\lfloor \frac{\lceil d(S^\text{DS}_i)/\wmax \rceil}{2}\right\rfloor +1\right)$; 
thus, the density of $S_i$ is greater than 
\begin{align*}
\gamma \cdot \wmin \left(\left\lfloor \frac{\lceil d(S^\text{DS}_i)/\wmax \rceil}{2}\right\rfloor +1\right)/2 
&\geq \gamma \cdot \wmin \left(\frac{d(S^\text{DS}_i)/\wmax }{2}-\frac{1}{2}+1\right)/2\\
&> \frac{\gamma}{4}\cdot\frac{\wmin}{\wmax}\cdot d(S^\text{DS}_i)
\geq \frac{\gamma}{4}\cdot\frac{\wmin}{\wmax}\cdot \textsf{OPT}_i, 
\end{align*}
which means $\frac{\gamma}{4}\cdot \frac{\wmin}{\wmax}$-approximation. 

We next consider the case where $k\leq \gamma \left(\left\lfloor \frac{\lceil d(S^\text{DS}_i)/\wmax \rceil}{2}\right\rfloor +1\right)$ holds. 
Applying Theorem~\ref{thm:our_mader} to $G[S^\text{DS}_i]$ with setting $d=d(S^\text{DS}_i)$, 
we see that $G[S^\text{DS}_i]$ has a $\left(\left\lfloor \frac{\lceil d(S^\text{DS}_i)/\wmax(G[S^\text{DS}_i]) \rceil}{2}\right\rfloor +1\right)$-vertex-connected subgraph 
wherein the minimum weighted degree of vertices is greater than $d(S^\text{DS}_i)$. 
Algorithm~\ref{alg:biapprox_vertex} employs such a subset as $S^*_i$. 
As each vertex has weighted degree greater than $d(S^\text{DS}_i)$, 
the density of $S^*_i$ is greater than $d(S^\text{DS}_i)/2 \geq \textsf{OPT}_i/2$, 
which means $1/2$-approximation (thus $\frac{\gamma}{4}\cdot \frac{\wmin}{\wmax}$-approximation). 
\end{proof}

From the proof, we see that if the if-condition of Algorithm~\ref{alg:biapprox_vertex} holds, the output admits $1/2$-approximation, 
irrespective of edge weights. 
Moreover, it should be noted that setting $\gamma=1$ in the theorem, 
we can obtain an ordinary $\frac{1}{4}\cdot \frac{\wmin}{\wmax}$-approximation algorithm for Problem~\ref{prob:vertex}. 
In Section~\ref{sec:unweighted}, we present an algorithm with a better approximation ratio. 

\subsection{Algorithm for Problem~\ref{prob:edge} and Analysis}
Here we present a bicriteria approximation algorithm for Problem~\ref{prob:edge}, 
which is an edge-connectivity counterpart of Algorithm~\ref{alg:biapprox_vertex}. 
For a given edge-weighted graph $G=(V,E,w)$, our algorithm first finds the family of maximal $k$-edge-connected subgraphs 
$\{G[S_1],\dots, G[S_p]\}$. 
As reviewed in Section~\ref{sec:related}, this can be done in $O(|V|^2(|E|+|V|\log|V|))$ time 
using one of the minimum cut algorithms by Nagamochi and Ibaraki~\cite{Nagamochi_Ibaraki_92}, 
Stoer and Wagner~\cite{Stoer_Wagner_97}, and Frank~\cite{Frank94} as a subroutine. 
If $G$ is simple unweighted, the time complexity reduces to $O(|E||V|\log^2|V|\log\log^2|V|)$ 
using the minimum cut algorithm by Henzinger et al.~\cite{Henzinger20}. 

In the processing of $G[S_i]$ for each $i=1,\dots, p$, 
the algorithm computes a \emph{variant} of a Mader subgraph of $G[S^\text{DS}_i]$, 
i.e., a $\wmin \left(\left\lfloor \frac{\lceil d(S^\text{DS}_i)/\wmax \rceil}{2}\right\rfloor +1\right)$-edge-connected subgraph 
in $G[S^\text{DS}_i]$ wherein the minimum weighted degree of vertices is greater than $d(S^\text{DS}_i)$. 
The existence of such a subgraph is guaranteed by a corollary of Theorem~\ref{thm:our_mader}, which we will present later. Recall that Algorithm~\ref{alg:biapprox_vertex} uses the procedure \texttt{Mader\_subgraph}. 
On the other hand, the above variant can be computed 
using the strategy employed by the algorithms for computing the family of maximal $k$-edge-connected subgraphs, 
presented in Section~\ref{sec:related}.
Specifically, the strategy in our scenario is as follows: 
if the weight of the minimum cut of $G[S^\text{DS}_i]$ is less than $\wmin \left\lfloor \frac{\lceil d(S^\text{DS}_i)/\wmax \rceil}{2}\right\rfloor +1$, divide the graph into two subgraphs along with the cut and then repeat the procedure on the resulting subgraphs 
(until it finds the variant of a Mader subgraph). 
It should be noted that in order to satisfy the minimum weighted degree condition, 
our algorithm needs to conduct the procedure \texttt{Peel} every time before it processes a new subgraph. 
For reference, the pseudocode of our algorithm is given in Algorithm~\ref{alg:biapprox_edge}. 

Here we evaluate the running time of Algorithm~\ref{alg:biapprox_edge}. 
It is easy to see that the above algorithm for finding the variant of a Mader subgraph still has the same running time as 
that of algorithms for computing the family of maximal $k$-edge-connected subgraphs. 
Therefore, the time complexity of the processing of each $G[S_i]$ is bounded by $O(T_\text{DS}(S_i)+|S_i|^2(|E(S_i)|+|S_i|\log |S_i|))$, 
where $T_\text{DS}(S_i)$ is the computation time required to find a densest subgraph in $G[S_i]$. 
Recalling that maximal $k$-edge-connected subgraphs do not overlap for any $k$, 
we see that the time complexity of the entire for-loop is bounded by $O(T_\text{DS}(G)+|V|^2(|E|+|V|\log |V|))$, 
which also bounds the overall running time of Algorithm~\ref{alg:biapprox_edge}. 
For simple unweighted graphs, we have the running time of $O(|V|^3+|E||V|\log^2|V|\log\log^2|V|)$. 


\begin{algorithm}[t]
\caption{Bicriteria approximation algorithm with parameter $\gamma \in [1,2]$ for Problem~\ref{prob:edge}}
\label{alg:biapprox_edge}
\SetKwInOut{Input}{Input}
\SetKwInOut{Output}{Output}
\Input{\ $G=(V,E,w)$ and $k\in \mathbb{R}_{>0}$}
\Output{\ $S\subseteq V$ or \texttt{INFEASIBLE}}
Find the family of maximal $k$-edge-connected subgraphs $\{G[S_1],\dots, G[S_p]\}$\;
\If{there is no $k$-edge-connected subgraph found}{
\Return \texttt{INFEASIBLE}\;
}
\Else{
\For{$i=1,\dots, p$}{
$S^*_i\leftarrow S_i$\;
Find a densest subgraph $S^\text{DS}_i$ (without any constraint) in $G[S_i]$\;
\If{$k\leq \gamma \cdot \wmin \left(\left\lfloor \frac{\lceil d(S^\text{DS}_i)/\wmax \rceil}{2}\right\rfloor +1\right)$}{
$S^*_i \leftarrow$ The vertex set of a $\wmin \left(\left\lfloor \frac{\lceil d(S^\text{DS}_i)/\wmax \rceil}{2}\right\rfloor +1\right)$-edge-connected subgraph in $G[S^\text{DS}_i]$ wherein the minimum weighted degree of vertices is greater than $d(S^\text{DS}_i)$\;
}
}
\Return $S\in \argmax_{S\in \{S^*_1,\dots, S^*_p\}}d(S)$\;
}
\end{algorithm}


Finally we analyze the theoretical performance guarantee of Algorithm~\ref{alg:biapprox_edge}. 
It is easy to see that any (edge-weighted) $k$-vertex-connected graph $G$ is $\wmin k$-edge-connected, 
which gives the following corollary to Theorem~\ref{thm:our_mader}: 

\begin{corollary}\label{cor:our_mader}
Let $G=(V,E,w)$ be an edge-weighted graph and let $d$ be a positive real. 
If $G$ has density at least $d$, 
then $G$ has a $\wmin \left(\left\lfloor \frac{\lceil d/w_\mathrm{max} \rceil}{2}\right\rfloor +1\right)$-edge-connected subgraph wherein the minimum weighted degree of vertices is greater than $d$. 
\end{corollary}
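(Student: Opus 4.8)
The plan is to obtain Corollary~\ref{cor:our_mader} immediately from Theorem~\ref{thm:our_mader} together with the classical fact (Whitney's inequality) that for every simple graph the vertex connectivity is at most the edge connectivity. Throughout, write $\tau = \left\lfloor \frac{\lceil d/\wmax \rceil}{2}\right\rfloor + 1$, so that the asserted edge connectivity is exactly $\wmin\tau$.

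First I would apply Theorem~\ref{thm:our_mader} to $G$ with the given positive real $d$: since $G$ has density at least $d$, it yields a subgraph $H$ of $G$ that is $\tau$-vertex-connected and in which every vertex has weighted degree greater than $d$. The minimum-weighted-degree conclusion demanded by the corollary is thus already secured, and it only remains to convert the $\tau$-vertex-connectivity of $H$ into $\wmin\tau$-edge-connectivity. For that, let $H'$ denote the simple \emph{unweighted} graph on the same vertex and edge sets as $H$. By Whitney's inequality $\kappa(H') \le \lambda(H')$, so any cut of $H$ (equivalently of $H'$) must contain at least $\kappa(H') = \kappa(H) \ge \tau$ edges. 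Each of these edges belongs to $E$ and hence has weight at least $\wmin$, so every cut of $H$ has weight at least $\wmin\tau$; that is, $\lambda(H) \ge \wmin\tau$. Combining this with the weighted-degree bound shows that $H$ is precisely the subgraph claimed.

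There is essentially no obstacle in this argument; the only point that warrants a moment of care is the boundary behaviour of the connectivity parameters when $H$ is small or complete. Concretely, one should check that the convention adopted in this paper, namely $\kappa(K_m) = m-1$, is compatible with the inequality $\kappa \le \lambda$ used above — and indeed it is, since the unweighted edge connectivity of $K_m$ is also $m-1$ (isolating a single vertex is the cheapest cut). With that verified, the proof is complete.
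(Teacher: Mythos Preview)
Your proposal is correct and follows essentially the same approach as the paper: the paper simply notes that ``any (edge-weighted) $k$-vertex-connected graph $G$ is $\wmin k$-edge-connected'' and deduces the corollary directly from Theorem~\ref{thm:our_mader}. Your argument via Whitney's inequality on the underlying unweighted graph is exactly the standard justification of this fact, and the boundary check for complete graphs is a harmless elaboration.
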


Using this corollary, we can derive the bicriteria approximation ratio of Algorithm~\ref{alg:biapprox_edge}: 

\begin{theorem}\label{thm:biapprox_edge}
For any $\gamma\in [1,2]$, Algorithm~\ref{alg:biapprox_edge} is a polynomial-time $\left(\frac{\gamma}{4}\cdot \frac{\wmin}{\wmax},1/\gamma\right)$-bicriteria approximation algorithm 
for Problem~\ref{prob:edge}. 
\end{theorem}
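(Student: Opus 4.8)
The plan is to transcribe the proof of Theorem~\ref{thm:biapprox_vertex} almost verbatim, replacing every appeal to Theorem~\ref{thm:our_mader} by one to Corollary~\ref{cor:our_mader} and every vertex-connectivity fact by its edge-connectivity analogue. The running time has already been bounded in the paragraph preceding the theorem, so two things remain: (i) the returned set $S$ is $(k/\gamma)$-edge-connected, and (ii) $d(S)\ge \frac{\gamma}{4}\cdot\frac{\wmin}{\wmax}\cdot \textsf{OPT}$, where $\textsf{OPT}$ is the optimal value of Problem~\ref{prob:edge} on $G$.

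For (i) it suffices to check $G[S^*_i]$ for each $i$. If the if-condition $k\le \gamma\cdot\wmin(\lfloor \lceil d(S^\text{DS}_i)/\wmax\rceil/2\rfloor+1)$ fails, then $S^*_i=S_i$ and $G[S_i]$ is $k$-edge-connected, hence $(k/\gamma)$-edge-connected since $\gamma\ge 1$. If it holds, apply Corollary~\ref{cor:our_mader} to $G[S^\text{DS}_i]$ with $d=d(S^\text{DS}_i)$; noting $\wmax(G[S^\text{DS}_i])\le \wmax$ and $\wmin(G[S^\text{DS}_i])\ge \wmin$, we obtain that $G[S^\text{DS}_i]$ contains a $\wmin(\lfloor \lceil d(S^\text{DS}_i)/\wmax\rceil/2\rfloor+1)$-edge-connected subgraph whose minimum weighted degree exceeds $d(S^\text{DS}_i)$, which is exactly the object Algorithm~\ref{alg:biapprox_edge} computes; it is $(k/\gamma)$-edge-connected by the if-condition.

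For (ii), fix $i$ and let $\textsf{OPT}_i$ denote the optimal value of Problem~\ref{prob:edge} restricted to $G[S_i]$, so $\textsf{OPT}_i\le d(S^\text{DS}_i)$. If the if-condition holds, $G[S^*_i]$ has minimum weighted degree greater than $d(S^\text{DS}_i)$, hence $d(S^*_i)>d(S^\text{DS}_i)/2\ge \textsf{OPT}_i/2$, which exceeds $\frac{\gamma}{4}\cdot\frac{\wmin}{\wmax}\cdot\textsf{OPT}_i$ because $\gamma\le 2$ and $\wmin\le \wmax$. If the if-condition fails, $S^*_i=S_i$ is $k$-edge-connected, so every vertex has weighted degree at least $k$ (the edges incident to it form a cut), giving $d(S_i)\ge k/2 > \frac{\gamma}{2}\cdot\wmin(\lfloor \lceil d(S^\text{DS}_i)/\wmax\rceil/2\rfloor+1) \ge \frac{\gamma}{2}\cdot\wmin\bigl(\tfrac{d(S^\text{DS}_i)}{2\wmax}+\tfrac12\bigr) > \frac{\gamma}{4}\cdot\frac{\wmin}{\wmax}\cdot d(S^\text{DS}_i)\ge \frac{\gamma}{4}\cdot\frac{\wmin}{\wmax}\cdot\textsf{OPT}_i$, exactly as in Theorem~\ref{thm:biapprox_vertex}. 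Finally, since any feasible $S$ for Problem~\ref{prob:edge} on $G$ induces a $k$-edge-connected subgraph and hence lies inside some maximal $k$-edge-connected subgraph $G[S_i]$, where it remains feasible with the same induced edge set, we get $\textsf{OPT}=\max_i \textsf{OPT}_i$; as the algorithm returns the densest $S^*_i$, $d(S)\ge \max_i d(S^*_i)\ge \frac{\gamma}{4}\cdot\frac{\wmin}{\wmax}\cdot\textsf{OPT}$.

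The only step that is not a mechanical copy is verifying that the min-cut-based divide-and-conquer, interleaved with the extra \texttt{Peel} calls needed to enforce the minimum-weighted-degree condition, actually terminates with a nonempty subgraph having both required properties; this is the main obstacle. It is handled by a standard invariant argument: the subgraph guaranteed by Corollary~\ref{cor:our_mader} is never removed by \texttt{Peel}, which only deletes vertices of weighted degree at most $d(S^\text{DS}_i)$, and, being $\wmin(\lfloor \lceil d(S^\text{DS}_i)/\wmax\rceil/2\rfloor+1)$-edge-connected, it is never separated by a cut of smaller weight, so some branch of the recursion always contains it and is therefore nonempty. All remaining details are identical to the proof of Theorem~\ref{thm:biapprox_vertex}.
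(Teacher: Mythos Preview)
Your proposal is correct and takes exactly the approach the paper intends: the paper's own ``proof'' is just the sentence ``The proof is similar to that of Theorem~\ref{thm:biapprox_vertex}, and is omitted,'' and you have faithfully carried out that transcription, correctly swapping Theorem~\ref{thm:our_mader} for Corollary~\ref{cor:our_mader}, noting that $k$-edge-connectivity directly lower-bounds the weighted degree by $k$ (absorbing the extra $\wmin$ in the if-condition), and supplying the invariant argument for the cut-and-\texttt{Peel} recursion that the paper only sketches informally.
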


The proof is similar to that of Theorem~\ref{thm:biapprox_vertex}, and is omitted.

\subsection{Remarks on Theorem~\ref{thm:our_mader}}\label{subsec:remark}
Here we explain that our generalized Mader's theorem (i.e., Theorem~\ref{thm:our_mader}) is essential to derive the bicriteria approximation ratio given in Theorems~\ref{thm:biapprox_vertex} and \ref{thm:biapprox_edge}. To this end, recall that the straightforward application of the original Mader's theorem to edge-weighted graphs derives the following statement: Let $G=(V,E,w)$ be an edge-weighted graph and let $d$ be a positive real. If $G$ has density at least $d$, then $G$ has a $\left(\left\lfloor\frac{\lfloor d/\wmax \rfloor}{2}\right\rfloor+1\right)$-vertex-connected subgraph wherein the minimum weighted degree of vertices is greater than $\wmin \lfloor d/\wmax\rfloor$.



Obviously, the above statement is weaker than Theorem~\ref{thm:our_mader}. 
Indeed, vertex connectivity of $\left\lfloor\frac{\lceil d/\wmax \rceil}{2}\right\rfloor+1$ in Theorem~\ref{thm:our_mader} has decreased to $\left\lfloor\frac{\lfloor d/\wmax \rfloor}{2}\right\rfloor+1$, which is only a slight deterioration, 
but the minimum weighted degree of $d$ in Theorem~\ref{thm:our_mader} has significantly decreased to $\wmin \lfloor d/\wmax\rfloor$. 
It is easy to see that to prove Theorems~\ref{thm:biapprox_vertex} and \ref{thm:biapprox_edge}, 
vertex connectivity of $\left\lfloor\frac{\lfloor d/\wmax \rfloor}{2}\right\rfloor+1$ is sufficient, 
but the minimum weighted degree of $\wmin \lfloor d/\wmax\rfloor$ is insufficient. 
In fact, in the last paragraph of the proof of Theorem~\ref{thm:biapprox_vertex}, 
by using the decreased minimum weighted degree, 
we can only guarantee that the density of $S_i^*$ is greater than 
$\frac{\wmin \lfloor d(S^\text{DS}_i)/\wmax\rfloor}{2}\geq \frac{\wmin \lfloor \OPT_i/\wmax\rfloor}{2}$ 
(rather than $d(S^\text{DS}_i)/2\geq \OPT_i/2$ in the proof). 
Note that $\frac{\wmin \lfloor \OPT_i/\wmax\rfloor}{2}$ may be less than $\frac{\gamma}{4}\cdot \frac{\wmin}{\wmax}\cdot \OPT_i$, 
meaning that the decreased minimum weighted degree is insufficient to prove the theorem. 
We can see the same issue in the proof of Theorem~\ref{thm:biapprox_edge}.

\section{Approximation Algorithms}
\label{sec:unweighted} 
In this section, we design a polynomial-time $\left(\frac{6}{19}\cdot \frac{\wmin}{\wmax}\right)$-approximation algorithm for Problem~\ref{prob:vertex}, which improves the approximation ratio of   $\frac{1}{4}\cdot \frac{\wmin}{\wmax}$ that is immediately derived by  Algorithm~\ref{alg:biapprox_vertex}. Then we present its counterpart result for Problem~\ref{prob:edge}.  

\subsection{Algorithm for Problem~\ref{prob:vertex}} 
Our algorithm first computes the most highly connected subgraph in terms of vertex connectivity, 
i.e., $H\in \argmax\{\kappa(H)\mid \text{$H$ is a subgraph of $G$}\}$.   
This can be done using Matula's algorithm~\cite[Algorithm~A]{Matula78}.
Then our algorithm simply returns the subgraph if its vertex connectivity is no less than $k$ and \texttt{INFEASIBLE} otherwise.  Our algorithm is described in pseudocode as Algorithm~\ref{alg:approx_vertex}. 

Matula~\cite{Matula78} showed that the time complexity of the algorithm for computing the most highly connected subgraph in terms of vertex connectivity is given by $O(|V|\cdot T)$, 
where $T$ is the computation time required to find a minimum vertex separator of $G$. 
If we consider Gabow's vertex connectivity algorithm~\cite{Gabow06}, 
the time complexity becomes $O(|V|^2(\kappa(G)^2\cdot \min\{|V|^{3/4}, \kappa(G)^{3/2}\}+\kappa(G) |V|))$. 
Clearly, Algorithm~\ref{alg:approx_vertex} has the same time complexity. 

\begin{algorithm}[t]
\caption{Approximation algorithm for Problem~\ref{prob:vertex}}\label{alg:approx_vertex}
\SetKwInOut{Input}{Input}
\SetKwInOut{Output}{Output}
\Input{\ $G=(V,E,w)$ and $k\in \mathbb{Z}_{>0}$}
\Output{\ $S\subseteq V$ or \texttt{INFEASIBLE}}
$H\leftarrow \argmax\{\kappa(H)\mid \text{$H$ is a subgraph of $G$}\}$\;
\If{$\kappa(H) \geq k$}{
\Return the vertex set of $H$\;
}
\Else{
\Return \texttt{INFEASIBLE}\;
}
\end{algorithm}
 
\subsection{Analysis}
From now on, we analyze the theoretical performance guarantee of Algorithm~\ref{alg:approx_vertex}. 
To this end, we use the following theorem, which is a useful variant of Mader's theorem: 
\begin{theorem}[Bernshteyn and Kostochka~\cite{Bernshteyn_Kostochka_16}]\label{thm:Bernshteyn}
Let $G=(V,E)$ be an unweighted graph and let $t$ be an integer with $t\geq 2$. 
If $G$ satisfies $|V|\geq \frac{5}{2}t$ and $|E|>\frac{19}{12}t(|V|-t)$, then $G$ has a $(t+1)$-vertex-connected subgraph. 
\end{theorem}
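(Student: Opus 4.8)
The natural approach is the extremal / minimal-counterexample method behind Theorem~\ref{thm:mader} (see Diestel~\cite{Diestel16}); the full details are in Bernshteyn and Kostochka~\cite{Bernshteyn_Kostochka_16}, and we only outline the line of attack here. The plan is to suppose the conclusion fails and, among all graphs with no $(t+1)$-vertex-connected subgraph satisfying $|V|\ge\frac{5}{2}t$ and $|E|>\frac{19}{12}t(|V|-t)$, to pick one, $G=(V,E)$, with $|V|$ minimum and, subject to that, $|E|$ maximum; write $n=|V|$ and $m=|E|$, and aim for a contradiction.

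The first step is a minimum-degree bound. If some $v\in V$ has $\deg(v)\le\frac{19}{12}t$, then $G-v$ has $n-1$ vertices, still has no $(t+1)$-connected subgraph, and has $m-\deg(v)>\frac{19}{12}t(n-t)-\frac{19}{12}t=\frac{19}{12}t((n-1)-t)$ edges; so, provided $n-1\ge\frac{5}{2}t$, this contradicts the minimality of $n$. Hence either $n=\lceil\frac{5}{2}t\rceil$ --- a base case to be checked directly --- or $\delta(G)>\frac{19}{12}t$. Since $n\ge\frac{5}{2}t>t+1$, $G$ is not complete, and by assumption it is not $(t+1)$-connected, so it has a separator $C$ with $|C|\le t$; fix such a $C$ and a partition of $V$ into nonempty parts $A$, $C$, $B$ with no $A$--$B$ edges, and set $G_1=G[A\cup C]$, $G_2=G[B\cup C]$. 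With $n_i=|V(G_i)|$ and $m_i=|E(G_i)|$ one has $n_1+n_2=n+|C|\le n+t$ and $m_1+m_2\ge m$. If both $n_1,n_2\ge\frac{5}{2}t$, then neither $G_i$ can satisfy both hypotheses (otherwise, by minimality, $G_i$ --- hence $G$ --- would contain a $(t+1)$-connected subgraph), so $m_i\le\frac{19}{12}t(n_i-t)$ for $i=1,2$; summing, $m\le m_1+m_2\le\frac{19}{12}t(n_1+n_2-2t)\le\frac{19}{12}t(n-t)$, contradicting the choice of $G$.

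The hard part --- and the reason the classical form of this argument only yields a coefficient of roughly $2t$ in place of $\frac{19}{12}t$ --- is the remaining case, in which one side of every small separation is small, say $n_1<\frac{5}{2}t$. Then minimality says nothing about $G_1$, and one must exhibit a $(t+1)$-connected subgraph inside $G_1$ by other means. The bound $\delta(G)>\frac{19}{12}t$ transfers only to the vertices of $A$ within $G_1$ (the vertices of $C$ may have lost many neighbours to $B$), so a Dirac-type criterion ($\delta(H)\ge\frac{1}{2}(|V(H)|+t-1)$ implies $\kappa(H)\ge t+1$) does not apply directly, and re-applying the classical bound to $G_1$ alone is too weak by roughly a factor of two. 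Closing this gap is precisely where the constants $\frac{19}{12}$ and $\frac{5}{2}$ get calibrated: one would expect to need a cleverer choice of separation (a minimum separator, using submodularity of the cut function to reroute it so that both sides stay large, or choosing $A$ of smallest possible size and analysing the dense graph $G[A\cup C]$ directly) together with a refined potential in place of $m-\frac{19}{12}t(n-t)$ that penalises small pieces or keeps track of the vertices of degree close to $\frac{19}{12}t$. We expect this final case to be essentially the whole argument, with everything preceding it being routine.
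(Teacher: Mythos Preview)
The paper does not prove Theorem~\ref{thm:Bernshteyn}; it is quoted verbatim from Bernshteyn and Kostochka~\cite{Bernshteyn_Kostochka_16} and used as a black box in the proof of Theorem~\ref{thm:approx_vertex}. There is therefore no ``paper's own proof'' to compare your proposal against.

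That said, your outline is an accurate high-level description of the extremal method behind results of this type (minimal counterexample, minimum-degree reduction, separator split), and you are candid that it is only a sketch and that the crux lies in the small-side case. Your identification of where the constants $\tfrac{19}{12}$ and $\tfrac{5}{2}$ bite is correct in spirit. If your goal is merely to reproduce the role this theorem plays in the present paper, citing it suffices; if you actually want a self-contained proof, you would need to carry out the small-side analysis you flag as ``essentially the whole argument,'' and for that the original reference~\cite{Bernshteyn_Kostochka_16} is the right place to look.
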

 
We provide the approximation ratio of Algorithm~\ref{alg:approx_vertex} in the following theorem: 
\begin{theorem}\label{thm:approx_vertex}
Algorithm~\ref{alg:approx_vertex} is a polynomial-time $\left(\frac{6}{19}\cdot \frac{\wmin}{\wmax}\right)$-approximation algorithm for Problem~\ref{prob:vertex}. 
\end{theorem}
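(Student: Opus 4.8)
The plan is to verify two properties of Algorithm~\ref{alg:approx_vertex}: that it reports \texttt{INFEASIBLE} exactly when the instance is infeasible, and that on a feasible instance the returned vertex set $V(H)$ satisfies $d(H)\ge \tfrac{6}{19}\cdot\tfrac{\wmin}{\wmax}\cdot\OPT$. The first property is immediate: $H$ maximizes $\kappa(\cdot)$ over all subgraphs of $G$, so a $k$-vertex-connected subgraph exists if and only if $\kappa(H)\ge k$, and when $\kappa(H)\ge k$ the output $V(H)$ is a feasible solution. So the real work is the density bound.

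First I would record two facts about the output $H$. (a) In any graph the vertex connectivity is at most the minimum degree, so every vertex of $H$ has (unweighted) degree at least $\kappa(H)$ and hence weighted degree at least $\kappa(H)\wmin$; summing over vertices and using $\sum_{v}\deg_{V(H)}(v)=2w(H)$ gives $d(H)\ge \kappa(H)\wmin/2$. (b) If $S^*$ is an optimal solution, then $G[S^*]$ is $k$-vertex-connected, so $\kappa(H)\ge k\ge 1$. In view of (a), it therefore suffices to prove the single inequality
\[
\kappa(H)\ \ge\ \frac{12\,\OPT}{19\,\wmax},
\]
because this yields $d(H)\ge \tfrac{12\OPT}{19\wmax}\cdot\tfrac{\wmin}{2}=\tfrac{6}{19}\cdot\tfrac{\wmin}{\wmax}\cdot\OPT$.

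To prove this inequality, let $S^*$ be optimal, $n^*=|S^*|$, $d_0=\OPT/\wmax$, and let $G^\circ=(S^*,E(S^*))$ be the unweighted graph on $S^*$. Since every edge has weight at most $\wmax$, $G^\circ$ has at least $w(S^*)/\wmax=d_0 n^*$ edges, and from $\OPT\le \wmax\binom{n^*}{2}/n^*$ we get $n^*\ge 2d_0+1$. I would then split on the size of $d_0$. If $d_0\le \tfrac{19}{12}$, then $\kappa(H)\ge 1\ge \tfrac{12 d_0}{19}$ already suffices. If $d_0>1$, then $G^\circ$ has more than $n^*$ edges, hence is not a forest and contains a cycle, which is a $2$-vertex-connected subgraph of $G$; so $\kappa(H)\ge 2$, which covers the range $\tfrac{19}{12}<d_0<\tfrac{19}{6}$ (where $\tfrac{12 d_0}{19}<2$). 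Finally, for $d_0\ge \tfrac{19}{6}$ put $t=\bigl\lfloor \tfrac{12 d_0}{19}\bigr\rfloor\ge 2$ and apply Theorem~\ref{thm:Bernshteyn} to $G^\circ$: the hypothesis $n^*\ge \tfrac52 t$ holds because $\tfrac52 t\le \tfrac{30 d_0}{19}<2d_0\le n^*-1$, and the edge hypothesis holds because $|E(G^\circ)|\ge d_0 n^*\ge \tfrac{19}{12}t\,n^*>\tfrac{19}{12}t(n^*-t)$; the theorem then gives a $(t+1)$-vertex-connected subgraph of $G^\circ$, hence of $G$, so $\kappa(H)\ge t+1>\tfrac{12 d_0}{19}$. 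The polynomial running time is the one already established above (Matula's algorithm combined with Gabow's vertex-connectivity routine).

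The part I expect to be most delicate is not any single estimate but the interplay of the three hypotheses of Theorem~\ref{thm:Bernshteyn} — integrality of $t$ with $t\ge 2$, the vertex bound $n^*\ge \tfrac52 t$, and the strict edge bound — which forces one to take $t$ as large as $\lfloor 12\OPT/(19\wmax)\rfloor$ and then to dispose separately of the small-density regimes $\OPT<\tfrac{19}{6}\wmax$, where that floor would drop below $2$; this is precisely what the cycle argument and the trivial bound $\kappa(H)\ge 1$ are for. One must then check the arithmetic so that the three ranges of $d_0$ jointly cover $(0,\infty)$ and each produces density at least $\tfrac{6}{19}\cdot\tfrac{\wmin}{\wmax}\cdot\OPT$.
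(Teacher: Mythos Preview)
Your proof is correct and follows essentially the same route as the paper's: reduce to showing $\kappa(H)\ge \tfrac{12}{19}\cdot\tfrac{\OPT}{\wmax}$ via the minimum-degree bound $d(H)\ge \kappa(H)\wmin/2$, then invoke Theorem~\ref{thm:Bernshteyn} on the unweighted shadow of a dense subgraph for the main regime and dispose of the low-density regimes by elementary means. The only cosmetic differences are that the paper compares against the unconstrained densest subgraph $S_\text{DS}$ rather than the optimum $S^*$, and case-splits on $\kappa_\text{max}\in\{1,\ge 2\}$ (using ``$G$ is a forest'' for the bottom case) instead of on $d_0$ (where you use the cycle argument); neither difference affects the substance.
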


\begin{proof}
Let $S\subseteq V$ be the output of Algorithm~\ref{alg:approx_vertex}.  
Define
\begin{align*}
\kappa_\text{max} =  \max\{\kappa(H)\mid \text{$H$ is a subgraph of $G$} \}.
\end{align*}
As we assumed that $|E|\geq 1$, we have $\kappa_\text{max}\geq 1$. 
Recall that $H=G[S]$ is a $\kappa_\text{max}$-vertex-connected subgraph. 
We denote by \OPT the density of an optimal solution to Problem~\ref{prob:vertex}. 
Let $S_\text{DS}\subseteq V$ be a densest subgraph (unconstrained) in $G$. 
As $d(S_\text{DS})\geq \OPT$, it suffices to show that $d(S)\geq \frac{6}{19}\cdot \frac{\wmin}{\wmax}\cdot d(S_\text{DS})$ holds.
Let $n_\text{DS}$ and $m_\text{DS}$ denote the number of vertices and edges in $G[S_\text{DS}]$, respectively. 

\textbf{Case I: $\kappa_\mathrm{max} = 1$.} In this case, $G$ is a forest; 
therefore, using the fact that $m_\text{DS}\leq n_\text{DS} - 1$, 
we have $d(S_\text{DS})=\frac{w(S_\text{DS})}{n_\text{DS}}\leq \wmax \cdot \frac{m_\text{DS}}{n_\text{DS}}< \wmax$. 
Any vertex subset (with size more than one) inducing a connected subgraph, including the output $S$, has density of at least 
\begin{align*}
\frac{\wmin}{2} > \frac{6}{19}\wmin > \frac{6}{19}\cdot \frac{\wmin}{\wmax}\cdot d(S_\text{DS}). 
\end{align*} 

\textbf{Case II: $\kappa_\mathrm{max} \geq 2$.} Let us define $t = \left\lfloor \frac{12}{19}\cdot \frac{m_\text{DS}}{n_\text{DS}} \right\rfloor$. 
As $m_\text{DS}\leq {n_\text{DS}\choose 2}$ holds, we have $t < \frac{2}{5} n_\text{DS}$, and thus $n_\text{DS} > \frac{5}{2}t$. 
As for the value of $m_\text{DS}$, if $t\neq 0$, $m_\text{DS} \ge \frac{19}{12} tn_\text{DS}>\frac{19}{12}t(n_\text{DS} - t)$ holds. 
Thus, by Theorem~\ref{thm:Bernshteyn}, if $t \ge 2$ holds, then the subgraph $G[S_\text{DS}]$ has a $(t+1)$-vertex-connected subgraph, which is also a subgraph of $G$. 
Hence, we have $\kappa_\text{max} \geq t + 1 \ge \frac{12}{19} \cdot \frac{m_\text{DS}}{n_\text{DS}}$. 
On the other hand, if $t < 2$ holds, then $\kappa_\text{max} \geq 2> \frac{12}{19}\cdot \frac{m_\text{DS}}{n_\text{DS}}$. 
In either case, noticing that the output $S$ is $\wmin \kappa_\text{max}$-edge-connected, 
we see that $S$ has density at least 
\begin{align*}
\frac{\wmin \cdot \kappa_\text{max}}{2} \geq \wmin\cdot \frac{6}{19}\cdot \frac{m_\text{DS}}{n_\text{DS}}
\geq \frac{6}{19}\cdot \frac{\wmin}{\wmax}\cdot \frac{w(S_\text{DS})}{n_\text{DS}}
\geq \frac{6}{19}\cdot \frac{\wmin}{\wmax}\cdot d(S_\text{DS}), 
\end{align*}
which completes the proof.  
\end{proof}

\subsection{Algorithm for Problem~\ref{prob:edge} and Analysis}
Here we present an approximation algorithm for Problem~\ref{prob:edge}, 
which is an edge-connectivity counterpart of Algorithm~\ref{alg:approx_vertex}. 
Specifically, our algorithm first computes the most highly connected subgraph in terms of edge connectivity, 
i.e., $H\in \argmax\{\lambda(H)\mid \text{$H$ is a subgraph of $G$}\}$. 
This can be done using a simple recursive algorithm mentioned by Matula~\cite{Matula78}, 
which is similar to the algorithms for computing the family of maximal $k$-edge-connected subgraphs. 
Then our algorithm simply returns the subgraph if its edge connectivity is no less than $k$ and \texttt{INFEASIBLE} otherwise. 
For reference, we describe the entire procedure in Algorithm~\ref{alg:approx_edge}. 

Matula~\cite{Matula78} stated that the time complexity of the algorithm for computing the most highly connected subgraph in terms of edge connectivity is given by $O(|V|\cdot T)$, where $T$ is the computation time required to find a minimum cut of $G$. 
If we consider one of the minimum cut algorithms by Nagamochi and Ibaraki~\cite{Nagamochi_Ibaraki_92}, Stoer and Wagner~\cite{Stoer_Wagner_97}, and Frank~\cite{Frank94}, 
the time complexity becomes $O(|V|^2(|E|+|V|\log|V|))$. 
If $G$ is simple unweighted, the time complexity reduces to $O(|E||V|\log^2|V|\log\log^2|V|)$ 
using the minimum cut algorithm by Henzinger et al.~\cite{Henzinger20}. 
Clearly, Algorithm~\ref{alg:approx_edge} has the same time complexity. 

\begin{algorithm}[t]
\caption{Approximation algorithm for Problem~\ref{prob:edge}}\label{alg:approx_edge}
\SetKwInOut{Input}{Input}
\SetKwInOut{Output}{Output}
\Input{\ $G=(V,E,w)$ and $k\in \mathbb{R}_{>0}$}
\Output{\ $S\subseteq V$ or \texttt{INFEASIBLE}}
$H\leftarrow \argmax\{\lambda(H)\mid \text{$H$ is a subgraph of $G$}\}$\;
\If{$\lambda(H) \geq k$}{
\Return the vertex set of $H$\;
}
\Else{
\Return \texttt{INFEASIBLE}\;
}
\end{algorithm}



Finally we analyze the theoretical performance guarantee of Algorithm~\ref{alg:approx_edge}. 
The following corollary is an edge-connectivity counterpart of Theorem \ref{thm:Bernshteyn}: 
\begin{corollary}\label{cor:Bernshteyn}
Let $G=(V,E,w)$ be an edge-weighted graph and let $t$ be an integer with $t\geq 2$. 
If $G$ satisfies $|V|\geq \frac{5}{2}t$ and $|E|>\frac{19}{12}t(|V|-t)$, then $G$ has a $\wmin(t+1)$-edge-connected subgraph. 
\end{corollary}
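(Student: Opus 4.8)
\textbf{Proof proposal for Corollary~\ref{cor:Bernshteyn}.}
The plan is to reduce this edge-weighted statement to the unweighted Theorem~\ref{thm:Bernshteyn} by passing to the underlying unweighted graph, exactly as was done when deriving Corollary~\ref{cor:our_mader} from Theorem~\ref{thm:our_mader}. First I would observe that the hypotheses $|V|\geq \frac{5}{2}t$ and $|E|>\frac{19}{12}t(|V|-t)$ are purely combinatorial: they refer only to the vertex and edge counts of $G$, not to the weights. Hence the unweighted graph $G'=(V,E)$ satisfies the same hypotheses, and since $t\geq 2$, Theorem~\ref{thm:Bernshteyn} applies to $G'$ and yields a vertex subset $S\subseteq V$ such that $G'[S]$ is $(t+1)$-vertex-connected.

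Next I would transfer this back to the weighted graph. The induced subgraph $G[S]$ has the same vertex set and edge set as $G'[S]$; vertex connectivity depends only on which edges are present, not on their weights, so $G[S]$ is also $(t+1)$-vertex-connected as an (edge-weighted) graph in the sense defined in the paper. The final step is to invoke the elementary fact, already stated in the excerpt just before Corollary~\ref{cor:our_mader}, that any edge-weighted $k$-vertex-connected graph is $\wmin k$-edge-connected: every cut separating $G[S]$ must contain at least $k$ edges (by $k$-vertex-connectivity, hence $k$-edge-connectivity in the unweighted sense), and each such edge has weight at least $\wmin$, so the cut has weight at least $\wmin k$. Applying this with $k=t+1$ shows $G[S]$ is $\wmin(t+1)$-edge-connected, which is the desired subgraph.

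There is essentially no obstacle here; the corollary is a routine combination of a cited theorem with a one-line weighting argument, and the proof is just a few sentences. The only point requiring a word of care is making sure the definition of ``$k$-vertex-connected'' for edge-weighted graphs used in the paper coincides with the unweighted notion (it does, since $\kappa(G)$ is defined via vertex separators and ignores weights), so that Theorem~\ref{thm:Bernshteyn} can be applied verbatim to $G'$ and the conclusion carried back to $G$.
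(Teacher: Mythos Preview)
Your proposal is correct and matches the paper's intended argument exactly: the paper states this corollary without proof, but the text preceding Corollary~\ref{cor:our_mader} makes clear that the derivation is via the observation that any $k$-vertex-connected weighted graph is $\wmin k$-edge-connected, applied to the $(t+1)$-vertex-connected subgraph furnished by Theorem~\ref{thm:Bernshteyn} on the underlying unweighted graph. Your handling of the one subtlety---that the hypotheses on $|V|$ and $|E|$ and the notion of vertex connectivity are weight-independent---is precisely what is needed.
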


Using this corollary, we can derive the approximation ratio of Algorithm~\ref{alg:approx_edge}: 
\begin{theorem}\label{thm:approx_edge}
Algorithm~\ref{alg:approx_edge} is a polynomial-time $\left(\frac{6}{19}\cdot \frac{\wmin}{\wmax}\right)$-approximation algorithm for Problem~\ref{prob:edge}. 
\end{theorem}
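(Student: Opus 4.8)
The plan is to mirror the proof of Theorem~\ref{thm:approx_vertex} almost line-for-line, replacing vertex connectivity with edge connectivity and invoking Corollary~\ref{cor:Bernshteyn} in place of Theorem~\ref{thm:Bernshteyn}. Let $S\subseteq V$ be the output of Algorithm~\ref{alg:approx_edge}, and set $\lambda_\text{max}=\max\{\lambda(H)\mid \text{$H$ is a subgraph of $G$}\}$. Since $|E|\geq 1$, we have $\lambda_\text{max}\geq \wmin$. Let \OPT\ denote the optimal density for Problem~\ref{prob:edge}, let $S_\text{DS}$ be an unconstrained densest subgraph of $G$, and write $n_\text{DS}=|S_\text{DS}|$, $m_\text{DS}=|E(S_\text{DS})|$. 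Because $H=G[S]$ attains $\lambda_\text{max}$ and $d(S_\text{DS})\geq \OPT$, it suffices to show $d(S)\geq \frac{6}{19}\cdot\frac{\wmin}{\wmax}\cdot d(S_\text{DS})$.

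First I would dispose of the degenerate case. If $\lambda_\text{max}=\wmin$ and moreover $G$ has no subgraph of edge connectivity exceeding $\wmin$, a cleaner route (avoiding the subtlety that $\lambda$ is weighted) is to split according to whether $G$ contains a cycle. If $G$ is a forest, then $m_\text{DS}\leq n_\text{DS}-1$, so $d(S_\text{DS})=\frac{w(S_\text{DS})}{n_\text{DS}}\leq \wmax\cdot\frac{m_\text{DS}}{n_\text{DS}}<\wmax$; any connected vertex subset of size at least two (in particular the output $S$, since $G$ is connected and $k\leq\lambda_\text{max}$) has density at least $\frac{\wmin}{2}>\frac{6}{19}\wmin>\frac{6}{19}\cdot\frac{\wmin}{\wmax}\cdot d(S_\text{DS})$. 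This handles the analogue of Case~I; when $G$ does contain a cycle we are in the regime $\lambda_\text{max}\geq\wmin$ with a genuinely $2$-edge-connected subgraph available, which is what the main case needs.

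For the main case I would set $t=\left\lfloor\frac{12}{19}\cdot\frac{m_\text{DS}}{n_\text{DS}}\right\rfloor$. Exactly as in Theorem~\ref{thm:approx_vertex}, $m_\text{DS}\leq\binom{n_\text{DS}}{2}$ gives $t<\frac{2}{5}n_\text{DS}$, hence $n_\text{DS}>\frac{5}{2}t$; and if $t\neq 0$ then $m_\text{DS}\geq\frac{19}{12}tn_\text{DS}>\frac{19}{12}t(n_\text{DS}-t)$. So if $t\geq 2$, Corollary~\ref{cor:Bernshteyn} applied to $G[S_\text{DS}]$ yields a $\wmin(t+1)$-edge-connected subgraph of $G$, giving $\lambda_\text{max}\geq\wmin(t+1)\geq\wmin\cdot\frac{12}{19}\cdot\frac{m_\text{DS}}{n_\text{DS}}$; if $t<2$, then a cycle-containing $G$ furnishes $\lambda_\text{max}\geq 2\wmin>\wmin\cdot\frac{12}{19}\cdot\frac{m_\text{DS}}{n_\text{DS}}$. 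In either subcase the output $S$ is $\lambda_\text{max}$-edge-connected, so every vertex of $G[S]$ has weighted degree at least $\lambda_\text{max}$, whence
\begin{align*}
d(S)\geq\frac{\lambda_\text{max}}{2}\geq\wmin\cdot\frac{6}{19}\cdot\frac{m_\text{DS}}{n_\text{DS}}\geq\frac{6}{19}\cdot\frac{\wmin}{\wmax}\cdot\frac{w(S_\text{DS})}{n_\text{DS}}=\frac{6}{19}\cdot\frac{\wmin}{\wmax}\cdot d(S_\text{DS}),
\end{align*}
which completes the argument. The only real obstacle is bookkeeping around the weighted definition of $\lambda$: in the unweighted proof the step ``$\kappa_\text{max}\geq 2$'' is used to fall back on $\lambda_\text{max}\geq\wmin\cdot 2$, and one must make sure the $t<2$ branch still produces a $2$-edge-connected subgraph — which is why splitting on whether $G$ has a cycle (rather than on the numerical value of $\lambda_\text{max}$) is the safe formulation. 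Everything else is the same inequality chase as in Theorem~\ref{thm:approx_vertex}, and I would simply remark that the proof is analogous and omit the repeated computations.
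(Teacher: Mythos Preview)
Your proposal is correct and follows essentially the same route the paper intends: the paper simply says the proof ``is similar to that of Theorem~\ref{thm:approx_vertex}, and is omitted,'' and your argument is precisely that mirroring, invoking Corollary~\ref{cor:Bernshteyn} in place of Theorem~\ref{thm:Bernshteyn} and using $d(S)\geq \lambda_\text{max}/2$ in the final inequality. Your choice to split on ``forest vs.\ contains a cycle'' rather than on the numerical value of $\lambda_\text{max}$ is a clean way to handle the weighted edge-connectivity bookkeeping (ensuring $\lambda_\text{max}\geq 2\wmin$ in the $t<2$ branch), and it matches the spirit of the paper's Case~I/Case~II distinction.
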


The proof is similar to that of Theorem~\ref{thm:approx_vertex}, and is omitted.

\section{Open Problems}
\label{sec:concl}

There are several directions for future research. 
The most interesting one is to design a polynomial-time algorithm that has a better (bicriteria or ordinary) approximation ratio. 
We wish to remark that assuming Mader's conjecture~\cite{Mader79}, which is a stronger version of Theorem~\ref{thm:Bernshteyn}, 
we can improve the approximation ratio of Algorithms~\ref{alg:approx_vertex} and \ref{alg:approx_edge}, i.e., $\frac{6}{19}\cdot \frac{\wmin}{\wmax}$, to $\frac{1}{3}\cdot \frac{\wmin}{\wmax}$. 
However, Mader~\cite{Mader79} also conjectured that the statement is best possible, 
making it unlikely to obtain an approximation ratio better than $\frac{1}{3}\cdot \frac{\wmin}{\wmax}$ via similar analysis. 
Another interesting direction is to investigate the computational complexity of Problems~\ref{prob:vertex} and~\ref{prob:edge}.

\section*{Acknowledgments}
F.B., D.G-S., and C.T. 
 acknowledge support from Intesa Sanpaolo Innovation Center, who had no role in study design, data collection and analysis, decision to publish, or preparation of the manuscript.
A.M. was supported by Grant-in-Aid for Research Activity Start-up (No.~17H07357) and Grant-in-Aid for Early-Career Scientists (No.~19K20218). 
This work was partially done while A.M. was at RIKEN AIP, Japan, and visitied ISI Foundation, Italy. 
\bibliographystyle{abbrv}
\bibliography{dense-well-connected-subgraph}

\begin{thebibliography}{10}

\bibitem{Andersen_Chellapilla_09}
R.~Andersen and K.~Chellapilla.
\newblock Finding dense subgraphs with size bounds.
\newblock In {\em WAW~'09: Proceedings of the 6th Workshop on Algorithms and
  Models for the Web Graph}, pages 25--37, 2009.

\bibitem{Angel+12}
A.~Angel, N.~Sarkas, N.~Koudas, and D.~Srivastava.
\newblock Dense subgraph maintenance under streaming edge weight updates for
  real-time story identification.
\newblock In {\em VLDB~'12: Proceedings of the 38th International Conference on
  Very Large Data Bases}, pages 574--585, 2012.

\bibitem{Bader_Hogue_03}
G.~D. Bader and C.~W.~V. Hogue.
\newblock An automated method for finding molecular complexes in large protein
  interaction networks.
\newblock {\em BMC Bioinformatics}, 4(1):1--27, 2003.

\bibitem{Bahmani+12}
B.~Bahmani, R.~Kumar, and S.~Vassilvitskii.
\newblock Densest subgraph in streaming and mapreduce.
\newblock In {\em VLDB~'12: Proceedings of the 38th International Conference on
  Very Large Data Bases}, pages 454--465, 2012.

\bibitem{Bernshteyn_Kostochka_16}
A.~Bernshteyn and A.~Kostochka.
\newblock On the number of edges in a graph with no $(k+1)$-connected
  subgraphs.
\newblock {\em Discrete Mathematics}, 339(2):682--688, 2016.

\bibitem{Bhaskara+10}
A.~Bhaskara, M.~Charikar, E.~Chlamtac, U.~Feige, and A.~Vijayaraghavan.
\newblock Detecting high log-densities: An ${O}(n^{1/4})$ approximation for
  densest $k$-subgraph.
\newblock In {\em STOC~'10: Proceedings of the 42nd ACM Symposium on Theory of
  Computing}, pages 201--210, 2010.

\bibitem{Bhattacharya+15}
S.~Bhattacharya, M.~Henzinger, D.~Nanongkai, and C.~E. Tsourakakis.
\newblock Space- and time-efficient algorithm for maintaining dense subgraphs
  on one-pass dynamic streams.
\newblock In {\em STOC~'15: Proceedings of the 47th ACM Symposium on Theory of
  Computing}, pages 173--182, 2015.

\bibitem{Charikar00}
M.~Charikar.
\newblock Greedy approximation algorithms for finding dense components in a
  graph.
\newblock In {\em APPROX~'00: Proceedings of the 3rd International Workshop on
  Approximation Algorithms for Combinatorial Optimization}, pages 84--95, 2000.

\bibitem{Chechik17}
S.~Chechik, T.~D. Hansen, G.~F. Italiano, V.~Loitzenbauer, and N.~Parotsidis.
\newblock Faster algorithms for computing maximal 2-connected subgraphs in
  sparse directed graphs.
\newblock In {\em SODA~'17: Proceedings of the 28th Annual ACM-SIAM Symposium
  on Discrete Algorithms}, pages 1900--1918, 2017.

\bibitem{Diestel16}
R.~Diestel.
\newblock {\em Graph Theory}, volume 173 of {\em Graduate Texts in
  Mathematics}.
\newblock Springer-Verlag Berlin Heidelberg, 5th edition, 2016.

\bibitem{Dourisboure+07}
Y.~Dourisboure, F.~Geraci, and M.~Pellegrini.
\newblock Extraction and classification of dense communities in the web.
\newblock In {\em WWW~'07: Proceedings of the 16th International Conference on
  World Wide Web}, pages 461--470, 2007.

\bibitem{Epasto+15}
A.~Epasto, S.~Lattanzi, and M.~Sozio.
\newblock Efficient densest subgraph computation in evolving graphs.
\newblock In {\em WWW~'15: Proceedings of the 24th International Conference on
  World Wide Web}, pages 300--310, 2015.

\bibitem{Feige+01}
U.~Feige, D.~Peleg, and G.~Kortsarz.
\newblock The dense $k$-subgraph problem.
\newblock {\em Algorithmica}, 29(3):410--421, 2001.

\bibitem{Forster+20}
S.~Forster, D.~Nanongkai, L.~Yang, T.~Saranurak, and S.~Yingchareonthawornchai.
\newblock Computing and testing small connectivity in near-linear time and
  queries via fast local cut algorithms.
\newblock In {\em SODA~'20: Proceedings of the 31st Annual ACM-SIAM Symposium
  on Discrete Algorithms}, pages 2046--2065, 2020.

\bibitem{Frank94}
A.~Frank.
\newblock On the edge-connectivity algorithm of nagamochi and ibaraki.
\newblock {\em Laboratoire Artemis, IMAG, Universite J. Fourier}, 1994.

\bibitem{Gabow06}
H.~N. Gabow.
\newblock Using expander graphs to find vertex connectivity.
\newblock {\em Journal of the ACM}, 53(5):800--–844, 2006.

\bibitem{Galimberti+17}
E.~Galimberti, F.~Bonchi, and F.~Gullo.
\newblock Core decomposition and densest subgraph in multilayer networks.
\newblock In {\em CIKM~'17: Proceedings of the 26th ACM International
  Conference on Information and Knowledge Management}, pages 1807--1816, 2017.

\bibitem{Garey_Johnson_79}
M.~R. Garey and D.~S. Johnson.
\newblock {\em Computers and Intractability: A Guide to the Theory of
  NP-Completeness}.
\newblock W. H. Freeman \& Co., NY, 1979.

\bibitem{Gibson+05}
D.~Gibson, R.~Kumar, and A.~Tomkins.
\newblock Discovering large dense subgraphs in massive graphs.
\newblock In {\em VLDB~'05: Proceedings of the 31st International Conference on
  Very Large Data Bases}, pages 721--732, 2005.

\bibitem{Gionis_Tsourakakis_15}
A.~Gionis and C.~E. Tsourakakis.
\newblock Dense subgraph discovery: {KDD} 2015 {T}utorial.
\newblock In {\em KDD~'15: Proceedings of the 21st ACM SIGKDD International
  Conference on Knowledge Discovery and Data Mining}, pages 2313--2314, 2015.

\bibitem{Goldberg84}
A.~V. Goldberg.
\newblock {\em Finding a maximum density subgraph}.
\newblock University of California Berkeley, 1984.

\bibitem{Henzinger+15}
M.~Henzinger, S.~Krinninger, and V.~Loitzenbauer.
\newblock Finding 2-edge and 2-vertex strongly connected components in
  quadratic time.
\newblock In {\em ICALP~'15: Proceedings of the 42nd International Colloquium
  on Automata, Languages and Programming}, pages 713--724, 2015.

\bibitem{Henzinger20}
M.~Henzinger, S.~Rao, and D.~Wang.
\newblock Local flow partitioning for faster edge connectivity.
\newblock {\em SIAM Journal on Computing}, 49(1):1--36, 2020.

\bibitem{Henzinger+00}
M.~R. Henzinger, S.~Rao, and H.~N. Gabow.
\newblock Computing vertex connectivity: New bounds from old techniques.
\newblock {\em Journal of Algorithms}, 34(2):222--250, 2000.

\bibitem{Hopcroft_Tarjan_73}
J.~E. Hopcroft and R.~E. Tarjan.
\newblock Dividing a graph into triconnected components.
\newblock {\em SIAM Journal on Computing}, 2(3):135--158, 1973.

\bibitem{Hu+17}
S.~Hu, X.~Wu, and T.-H.~H. Chan.
\newblock Maintaining densest subsets efficiently in evolving hypergraphs.
\newblock In {\em CIKM~'17: Proceedings of the 26th ACM International
  Conference on Information and Knowledge Management}, pages 929--938, 2017.

\bibitem{Karger00}
D.~R. Karger.
\newblock Minimum cuts in near-linear time.
\newblock {\em Journal of the ACM}, 47(1):46–--76, 2000.

\bibitem{Kawarabayashi_Thorup_18}
K.-I. Kawarabayashi and M.~Thorup.
\newblock Deterministic edge connectivity in near-linear time.
\newblock {\em Journal of the ACM}, 66(1):Article No. 4, 2018.

\bibitem{Kawase+19}
Y.~Kawase, Y.~Kuroki, and A.~Miyauchi.
\newblock Graph mining meets crowdsourcing: Extracting experts for answer
  aggregation.
\newblock In {\em IJCAI~'19: Proceedings of the 28th International Joint
  Conference on Artificial Intelligence}, pages 1272--1279, 2019.

\bibitem{Kawase_Miyauchi_18}
Y.~Kawase and A.~Miyauchi.
\newblock The densest subgraph problem with a convex/concave size function.
\newblock {\em Algorithmica}, 80(12):3461--3480, 2018.

\bibitem{Khuller_Saha_09}
S.~Khuller and B.~Saha.
\newblock On finding dense subgraphs.
\newblock In {\em ICALP~'09: Proceedings of the 36th International Colloquium
  on Automata, Languages and Programming}, pages 597--608, 2009.

\bibitem{snap}
J.~Leskovec and A.~Krevl.
\newblock {SNAP Datasets}: {Stanford} large network dataset collection.
\newblock \url{http://snap.stanford.edu/data}, 2014.

\bibitem{Linial+88}
N.~Linial, L.~Lov\'asz, and A.~Wigderson.
\newblock Rubber bands, convex embeddings and graph connectivity.
\newblock {\em Combinatorica}, 8(1):91--102, 1988.

\bibitem{Mader72}
W.~Mader.
\newblock Existenzn-fach zusammenh{\"a}ngender teilgraphen in graphen
  gen{\"u}gend gro{\ss}er kantendichte.
\newblock {\em Abhandlungen aus dem Mathematischen Seminar der Universit{\"a}t
  Hamburg}, 37(1):86--97, 1972.

\bibitem{Mader79}
W.~Mader.
\newblock Connectivity and edge-connectivity in finite graphs.
\newblock In B.~Bollobas, editor, {\em Surveys in Combinatorics (Proceedings of
  the Seventh British Combinatorial Conference)}, volume~38 of {\em London
  Mathematical Society Lecture Note Series}, pages 66--95, 1979.

\bibitem{Makino88}
S.~Makino.
\newblock An algorithm for finding all the k-components of a digraph.
\newblock {\em International journal of computer mathematics},
  24(3-4):213--221, 1988.

\bibitem{Matula77}
D.~W. Matula.
\newblock Graph theoretic techniques for cluster analysis algorithms.
\newblock In J.~V. Ryzin, editor, {\em Classification and Clustering}, pages
  95--129. Academic Press, 1977.

\bibitem{Matula78}
D.~W. Matula.
\newblock $k$-blocks and ultrablocks in graphs.
\newblock {\em Journal of Combinatorial Theory, Series B}, 24(1):1--13, 1978.

\bibitem{Mitzenmacher+15}
M.~Mitzenmacher, J.~Pachocki, R.~Peng, C.~E. Tsourakakis, and S.~C. Xu.
\newblock Scalable large near-clique detection in large-scale networks via
  sampling.
\newblock In {\em KDD~'15: Proceedings of the 21st ACM SIGKDD International
  Conference on Knowledge Discovery and Data Mining}, pages 815--824, 2015.

\bibitem{Miyauchi+15}
A.~Miyauchi, Y.~Iwamasa, T.~Fukunaga, and N.~Kakimura.
\newblock Threshold influence model for allocating advertising budgets.
\newblock In {\em ICML~'15: Proceedings of the 32nd International Conference on
  Machine Learning}, pages 1395--1404, 2015.

\bibitem{Miyauchi_Takeda_18}
A.~{Miyauchi} and A.~{Takeda}.
\newblock Robust densest subgraph discovery.
\newblock In {\em ICDM~'18: Proceedings of the 18th IEEE International
  Conference on Data Mining}, pages 1188--1193, 2018.

\bibitem{Nagamochi_Ibaraki_92}
H.~Nagamochi and T.~Ibaraki.
\newblock Computing edge-connectivity in multigraphs and capacitated graphs.
\newblock {\em SIAM Journal on Discrete Mathematics}, 5(1):54--66, 1992.

\bibitem{Nanongkai+19}
D.~Nanongkai, T.~Saranurak, and S.~Yingchareonthawornchai.
\newblock Breaking quadratic time for small vertex connectivity and an
  approximation scheme.
\newblock In {\em STOC~'19: Proceedings of the 51st Annual ACM SIGACT Symposium
  on Theory of Computing}, pages 241–--252, 2019.

\bibitem{orlin2013max}
J.~B. Orlin.
\newblock Max flows in {$O(nm)$} time, or better.
\newblock In {\em STOC~'13: Proceedings of the 45th Annual ACM Symposium on
  Theory of Computing}, pages 765--774, 2013.

\bibitem{Spirin03}
V.~Spirin and L.~A. Mirny.
\newblock Protein complexes and functional modules in molecular networks.
\newblock {\em Proceedings of the National Academy of Sciences of the United
  States of America}, 100(21):12123--12128, 2003.

\bibitem{Stoer_Wagner_97}
M.~Stoer and F.~Wagner.
\newblock A simple min-cut algorithm.
\newblock {\em Journal of the ACM}, 44(4):585--591, 1997.

\bibitem{Tarjan72}
R.~Tarjan.
\newblock Depth-first search and linear graph algorithms.
\newblock {\em SIAM Journal on Computing}, 1(2):146--160, 1972.

\bibitem{Tsourakakis_15}
C.~E. Tsourakakis.
\newblock The k-clique densest subgraph problem.
\newblock In {\em WWW~'15: Proceedings of the 24th International Conference on
  World Wide Web}, pages 1122--1132, 2015.

\bibitem{tsourakakis2015streaming}
C.~E. Tsourakakis.
\newblock Streaming graph partitioning in the planted partition model.
\newblock In {\em COSN~'15: Proceedings of the 2015 ACM Conference on Online
  Social Networks}, pages 27--35, 2015.

\bibitem{tsourakakis2013denser}
C.~E. Tsourakakis, F.~Bonchi, A.~Gionis, F.~Gullo, and M.~Tsiarli.
\newblock Denser than the densest subgraph: extracting optimal quasi-cliques
  with quality guarantees.
\newblock In {\em KDD~'13: Proceedings of the 19th ACM SIGKDD International
  Conference on Knowledge Discovery and Data Mining}, pages 104--112, 2013.

\bibitem{Tsourakakis+19}
C.~E. Tsourakakis, T.~Chen, N.~Kakimura, and J.~Pachocki.
\newblock Novel dense subgraph discovery primitives: Risk aversion and
  exclusion queries.
\newblock In {\em ECML-PKDD~'19: Proceedings of the 2019 European Conference on
  Machine Learning and Principles and Practice of Knowledge Discovery in
  Databases}, 2019.
\newblock No page numbers.

\bibitem{WuZLFJZ16}
Y.~Wu, X.~Zhu, L.~Li, W.~Fan, R.~Jin, and X.~Zhang.
\newblock Mining dual networks: Models, algorithms, and applications.
\newblock {\em ACM Transactions on Knowledge Discovery from Data},
  10(4):40:1--40:37, 2016.

\bibitem{Zou_13}
Z.~Zou.
\newblock Polynomial-time algorithm for finding densest subgraphs in uncertain
  graphs.
\newblock In {\em MLG~'13: Proceedings of the 11th Workshop on Mining and
  Learning with Graphs}, 2013.
\newblock No page numbers.

\end{thebibliography}

\end{document}